\newtheorem{thm}{Theorem}
\newtheorem{lem}[thm]{Lemma}
\newtheorem{prop}[thm]{Proposition}
\newtheorem{conj}{Conjecture}
\theoremstyle{definition}
\newtheorem{defn}[thm]{Definition}
\theoremstyle{remark}
\newtheorem*{rmk}{Remark}
\newcommand{\eps}{\varepsilon}
\newcommand{\DEF}{{:=}}
\newcommand{\FED}{{=:}}
\newcommand{\ii}{\mathrm{i}}
\newcommand{\PT}[1]{\mathbf{#1}}
\newcommand{\re}{\mathop{\mathrm{Re}}}
\DeclareMathOperator{\betafcn}{B}
\DeclareMathOperator{\gammafcn}{\Gamma}
\DeclareMathOperator{\HyperF}{F}
\newcommand{\Hypergeom}[5]{{\sideset{_#1}{_#2}\HyperF\!\left(\substack{\displaystyle#3\\\displaystyle#4};#5\right)}}
\title[A FASCINATING POLYNOMIAL SEQUENCE]
{A FASCINATING POLYNOMIAL SEQUENCE ARISING FROM AN ELECTROSTATICS
PROBLEM ON THE SPHERE}
\author{J. S. Brauchart, P. D. Dragnev\textdagger, E. B. Saff\textdaggerdbl,
and C. E. van de Woestijne\textparagraph} 
\thanks{\noindent The research of this author was supported, in part,
by the Austrian Science Foundation (FWF) under grant S9603-N13. The
author is recipient of an {\sc APART}-fellowship of the Austrian
Academy of Sciences at Vanderbilt (2009) and at UNSW (2010). \\
\textdagger The research of this author was supported, in part, by a Research-in-aid Grant from ORES, IPFW. \\
\textdaggerdbl The research of this author was supported, in part,
by
U.S. National Science Foundation Grant DMS-0808093. \\
\textparagraph The research of this author was
supported by the Austrian Science Foundation FWF under grants S9606 and S9611.}
\date{\today}
\begin{document}

\address{J. S. Brauchart:
School of Mathematics and Statistics,
University of New South Wales,
Sydney, NSW, 2052,
Australia }
\address{E. B. Saff:
Center for Constructive Approximation,
Department of Mathematics,
Vanderbilt University,
Nashville, TN 37240,
USA}
\address{P. D. Dragnev:
Department of Mathematical Sciences,
Indiana-Purdue University,
Fort Wayne, IN 46805,
USA}
\address{C. E. van de Woestijne:
Lehrstuhl f{\"u}r Mathematik und Statistik,
Montanuniversit{\"a}t Leoben,
Franz-Josef-Stra{\ss}e 18,
8700 Leoben, Austria
}
\email{j.brauchart@unsw.edu.au}
\email{dragnevp@ipfw.edu}
\email{Edward.B.Saff@Vanderbilt.Edu}
\email{c.vandewoestijne@unileoben.ac.at}

\begin{abstract}
A positive unit point charge approaching from infinity a perfectly spherical isolated conductor carrying a total charge of $+1$ will eventually cause a negatively charged spherical cap to appear. The determination of the smallest distance $\rho(d)$ ($d$ is the dimension of the unit sphere) from the point charge to the sphere where still all of the sphere is positively charged is known as Gonchar's problem. Using classical potential theory for the harmonic case, we show that $1+\rho(d)$ is equal to the largest positive zero of a certain sequence of monic polynomials of degree $2d-1$ with integer coefficients which we call Gonchar polynomials. Rather surprisingly, $\rho(2)$ is the Golden ratio and $\rho(4)$ the lesser known Plastic number. But Gonchar polynomials have other interesting properties. We discuss their factorizations, investigate their zeros and present some challenging conjectures.
\end{abstract}

\keywords{Coulomb potential; Electrostatics problem; Golden ratio; Gonchar problem; Gonchar polynomial; Plastic Number; Signed Equilibrium; Sphere} 
\subjclass[2000]{Primary 30C10; Secondary 31B10}

\maketitle

\section{Introduction}

Let $\mathbb{S}^{d}$ denote the unit sphere in the Euclidean space
$\mathbb{R}^{d+1}$. Suppose that it is insulated and has a total
positive charge of $+1$. In the absence of an external field the
charge will distribute uniformly with respect to the normalized
surface area measure (unit Lebesgue measure) $\sigma_d$. Now we
introduce a positive unit point charge exterior to the sphere that
repels the charge on the sphere in accordance with the Newton
potential $1/r^{d-1}$, where $r$ represents the distance between
point charges. If this point charge is very close to the sphere,
then one would expect it to cause a negatively charged spherical cap
to appear, while if the point charge is very far from the sphere its
influence is negligible and the charge on the sphere will be
everywhere positive and nearly uniformly distributed over the entire
sphere. We consider the following question: what is the smallest
distance from the unit point charge to $\mathbb{S}^d$ such that the
distribution of the positive charge on the sphere covers all of the
sphere? We will denote this critical distance by $\rho(d)$. As we
shall show, $1+\rho(d)$ equals the largest positive root of the
following polynomial equation of degree $2d-1$:
\begin{equation} \label{polynomial.G}
G(d;z) \DEF \left[ \left( z - 1 \right)^d  - z - 1 \right] z^{d-1} +
\left( z - 1 \right)^d=0, \qquad d = 1, 2, 3, \dots .
\end{equation}
As the question above was communicated to the authors by A. A.
Gonchar, we shall refer to $G(d;z)$ as {\it Gonchar
polynomials}.

Rather surprisingly, $\rho(2)$ turns out to be the {\it Golden
ratio}, which is the limit of the ratio of successive terms in the
{\it Fibonacci sequence} $F_{n}=F_{n-1} +F_{n-2}$, $F_0=F_1=1$ and
$\rho(4)$ is the so-called {\it Plastic number} \cite{St1996}, which
is the limit of ratios of successive terms for the less known {\it
Padovan sequence} $P_n=P_{n-2}+P_{n-3}$, $P_0=P_1=P_2=1$ (sequence
A000931 in Sloane's OEIS~\cite{Sl2003}). That is,
\begin{align*}
\rho(2) &= \lim_{n\to \infty}\frac{F_{n+1}}{F_n}=\frac{1+\sqrt{5}}{2}=1.618033988\dots,\\
\rho(4) &=  \lim_{n\to \infty}\frac{P_{n+1}}{P_n}=\frac{( 9 -
\sqrt{69} )^{1/3} + ( 9 + \sqrt{69} )^{1/3}}{2^{1/3} 3^{2/3}} =
1.3247179572\dots.
\end{align*}
In addition to this curious coincidence, the polynomials $ G(d;z)$
exhibit rather fascinating properties with regard to their
irreducibility over the ring of polynomials with integer
coefficients, as well as the asymptotic behavior of their zeros. Our
goal in the next section is to show how the Gonchar polynomials are
derived and then, in Section 3, to explore some of their properties
and draw the reader's attention to some related conjectures.

\section{Signed Equilibrium}

A general charge distribution on $\mathbb{S}^d=\{\PT{x}\in R^{d+1}\,
:\, |\PT{x}|=1\}
$ will be modeled by a signed measure $\eta$
supported on $\mathbb{S}^d$ with $\eta(\mathbb{S}^d)=1$. The
corresponding {\it Newtonian potential and energy} are given by
\begin{equation}\label{SignedPot}
V^\eta (\PT{x})=\int \frac{1}{|\PT{x}-\PT{y}|^{d-1}}\, d
\eta(\PT{y}), \qquad \mathcal{I}(\eta)=\int V^\eta (\PT{x})\,
d\eta(\PT{x}).\nonumber
\end{equation}
For example, $V^\eta (\PT{x})$ is the familiar Coulomb potential
when $d=2$. For $\eta=\sigma_d$ we expect from the rotational
invariance that the potential $V^{\sigma_d}$ is constant on
$\mathbb{S}^d$ and on concentric spheres. This is a well-known fact in potential theory, but
for completeness we provide a proof below.

\begin{lem} \label{L1}
The potential $V^{\sigma_d}(\PT{x})$ satisfies the following relations
\begin{equation} \label{EquilPotSphere}
V^{\sigma_d}(\PT{x}) = 
\begin{cases}
1 & \text{for $|\PT{x}|\leq 1$,} \\
1 / |\PT{x}|^{d-1} & \text{for $|\PT{x}| > 1$.}
\end{cases}
\end{equation}
\end{lem}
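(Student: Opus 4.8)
The plan is to exploit the two structural features of the kernel $1/\abs{\PT{x}-\PT{y}}^{d-1}$: that it is invariant under rotations (it depends only on $\abs{\PT{x}-\PT{y}}$), and that for $d\geq 2$ it is, up to a constant, the Newtonian fundamental solution in $\R^{d+1}$, hence harmonic in $\PT{x}$ away from $\PT{y}$. First I would dispose of the degenerate case $d=1$: there the exponent $d-1$ vanishes, the kernel is identically $1$, and so $V^{\sigma_1}\equiv\sigma_1(\mathbb{S}^1)=1$, which agrees with both branches of \eqref{EquilPotSphere} since $1/\abs{\PT{x}}^{0}=1$. For the remainder I assume $d\geq 2$.

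Since $\sigma_d$ is invariant under the orthogonal group and the kernel depends only on $\abs{\PT{x}-\PT{y}}$, the potential is a radial function: $V^{\sigma_d}(\PT{x})=f(\abs{\PT{x}})$ for some $f$ on $[0,\infty)$. Because $\abs{\PT{x}-\PT{y}}^{-(d-1)}$ is harmonic in $\PT{x}$ on $\R^{d+1}\setminus\{\PT{y}\}$, and the support $\mathbb{S}^d$ stays at a positive distance from every compact subset of the open ball $\{\abs{\PT{x}}<1\}$ and of the exterior $\{\abs{\PT{x}}>1\}$, differentiation under the integral sign shows $V^{\sigma_d}$ is harmonic on each of these two regions.

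For the interior I would invoke the mean-value property: for $\abs{\PT{x}}=r<1$, radial symmetry gives that the average of $V^{\sigma_d}$ over the sphere of radius $r$ equals $f(r)$, while harmonicity forces this average to equal the value at the center, $f(0)=V^{\sigma_d}(\PT{0})=\int 1\,d\sigma_d=1$. Hence $f\equiv 1$ on the ball. For the exterior, the radial harmonic functions in $\R^{d+1}$ are exactly $A+B\,r^{-(d-1)}$; since $V^{\sigma_d}(\PT{x})\to 0$ as $\abs{\PT{x}}\to\infty$ we get $A=0$, and multiplying by $r^{d-1}$ and letting $r\to\infty$ gives $B=\lim_{r\to\infty}\int (r/\abs{r\PT{e}-\PT{y}})^{d-1}\,d\sigma_d(\PT{y})=1$, because $\abs{r\PT{e}-\PT{y}}/r\to 1$ uniformly in $\PT{y}\in\mathbb{S}^d$. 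This yields $f(r)=r^{-(d-1)}$ for $r>1$.

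Finally I would treat the boundary $\abs{\PT{x}}=1$: near the diagonal the measure is $d$-dimensional while the kernel has order $d-1<d$, so the singularity is integrable and $V^{\sigma_d}$ is finite and continuous up to the sphere, where both branches of \eqref{EquilPotSphere} give the common value $1$. The main technical points to watch are the justification of differentiating under the integral sign (yielding harmonicity), which is routine once one notes the uniform separation from the support on compacta, and the two limiting evaluations (at the center and at infinity) that pin down the constants; the continuity across $\mathbb{S}^d$ via the integrable-singularity estimate is the only place where a genuine, though standard, estimate is required.
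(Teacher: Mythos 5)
Your proof is correct, but it takes a genuinely different route from the paper's. The paper's key observation is that for fixed $\PT{x}$ with $|\PT{x}|>1$ the kernel is harmonic as a function of the \emph{integration variable} $\PT{y}$ in a neighborhood of the closed unit ball, so the integral over $\mathbb{S}^d$ is literally a spherical mean about the origin and the mean value property yields $V^{\sigma_d}(\PT{x})=|\PT{x}|^{1-d}$ in one stroke, with no need to classify radial harmonic functions or evaluate limits at infinity; the boundary value is then obtained by monotone convergence along the points $(n+1)\PT{x}/n$, and the interior constancy follows from the maximum principle (the ``Faraday cage'' step). You instead exploit harmonicity of $V^{\sigma_d}$ in $\PT{x}$ together with rotational invariance: the exterior is handled by the ODE classification $A+Br^{-(d-1)}$ of radial harmonic functions plus the decay and normalization at infinity (which forces your separate treatment of $d=1$, whereas the paper's argument is uniform in $d$), the interior by the mean value property at the center (arguably more elementary than the maximum principle, and logically independent of the boundary value, reversing the paper's order of deduction), and the boundary by continuity of the single-layer potential across $\mathbb{S}^d$. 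That last step is the heaviest ingredient in your approach: as you rightly flag, integrability of the singularity alone does not give continuity --- one needs a uniform smallness estimate for the contribution of a small cap near the singular point (standard for a bounded density on a smooth sphere, e.g. via $|\PT{x}-\PT{y}|\geq c\,|\PT{x}^{*}-\PT{y}|$ with $\PT{x}^{*}$ the radial projection of $\PT{x}$) --- and it is precisely here that the paper's monotone convergence trick is lighter. In exchange, your argument follows a standard, transferable template (radial harmonic function pinned down by boundary and decay conditions) that would adapt to other rotationally invariant densities, while the paper's proof is shorter but more special to the uniform measure.
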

\begin{proof}
Indeed, for any fixed $\PT{x}$ the Newtonian kernel
$f(\PT{y})=|\PT{x}-\PT{y}|^{1-d}$ is a harmonic function
in $\mathbb{R}^{d+1}\setminus \{ \PT{x} \}$,
that is $\Delta f (\PT{y})=\sum_{i=1}^{d+1} f_{y_i y_i}
(\PT{y}) \equiv 0$. Therefore,
by the mean value property for harmonic functions we have 
\begin{equation}
V^{\sigma_d}(\PT{x})=\int_{\mathbb{S}^d} f(\PT{y})\, d
\sigma_d(\PT{y})=f(\PT{0})= \frac{1}{|\PT{x}|^{d-1}} \qquad
\text{{for $|\PT{x}|>1$.}} \label{LebesgueSurfacePotential}
\end{equation}

Considering the sequence of functions $f_n (y)=|(n+1)\PT{x}/n-\PT{y}|^{1-d}$ and applying the monotone convergence theorem allows us to extend this to the unit sphere
giving $V^{\sigma_d}(\PT{x}) \equiv 1$ for $|\PT{x}|=1$. By the
maximum principle for harmonic functions, the potential
$V^{\sigma_d}$ is constant everywhere in the closed unit ball
(``Faraday cage effect''), since it assumes its extreme values 
on the boundary $\mathbb{S}^d$, where it is constant. (One can also verify
this directly by applying the identity $| \PT{x} - \PT{y} |^2 = |
\PT{x} |^2 | \frac{\PT{x}}{| \PT{x} |^2} - \PT{y} |^2$ and
\eqref{LebesgueSurfacePotential}.)
\end{proof}

In the classical Coulomb case ($d=2$), a standard electrostatics
problem (see \cite[Ch.~2]{Ja1998}) is to find the charge
distribution on a charged, insulated, conducting sphere in the
presence of an external field (such as generated by a
positive point charge $q$ off the sphere). This motivates the
following definition.

\begin{defn} Given a continuous function $Q(\PT{x})$ on $\mathbb{S}^d$,
we call a signed measure $\eta_{Q}$ supported on $\mathbb{S}^d$
and of total charge $\eta_{Q}(\mathbb{S}^d)=1$ {\em a signed
equilibrium associated with $Q$} if its weighted Newtonian potential
is constant on $\mathbb{S}^d$; that is, for some constant $F_Q$, 
\begin{equation}
V^{\eta_{Q}}(\PT{x}) + Q(\PT{x}) = F_Q \qquad \text{everywhere on
$\mathbb{S}^d$.} \label{signedeq}\nonumber
\end{equation}
\end{defn}

In the above definition, the function $Q$ is referred to as an {\em
external field}. Thus Lemma \ref{L1} establishes that $\sigma_d$ is
the (signed) equilibrium measure on $\mathbb{S}^d$ in the absence of
an external field. The uniqueness  of the signed equilibrium is
settled by the following known proposition \cite[Lemma~23]{BrDrSa2009}, whose proof is presented in
the Appendix.

\begin{prop} \label{prop:uniqueness}
If a signed equilibrium $\eta_Q$ associated with an external field
$Q$ on $\mathbb{S}^d$ exists, then it is unique.
\end{prop}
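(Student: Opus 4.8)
The plan is to reduce the uniqueness assertion to the \emph{energy principle} for the Newtonian kernel, namely the fact that the energy functional $\mathcal{I}$ is strictly positive definite on signed measures supported on $\mathbb{S}^d$. First I would suppose that there are two signed equilibria $\eta_Q$ and $\eta_Q'$ associated with the same external field $Q$, with equilibrium constants $F_Q$ and $F_Q'$, and set $\nu \DEF \eta_Q - \eta_Q'$. Since both measures carry total charge $1$, the difference satisfies $\nu(\mathbb{S}^d) = 0$. Subtracting the two defining relations $V^{\eta_Q} + Q = F_Q$ and $V^{\eta_Q'} + Q = F_Q'$ on $\mathbb{S}^d$, the external field cancels and I obtain
\begin{equation}
V^{\nu}(\PT{x}) = V^{\eta_Q}(\PT{x}) - V^{\eta_Q'}(\PT{x}) = F_Q - F_Q' \FED c \qquad \text{for all } \PT{x} \in \mathbb{S}^d,\nonumber
\end{equation}
that is, the potential of $\nu$ is \emph{constant} on the sphere.

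The next step exploits this constancy to annihilate the energy of $\nu$. Because $\supp \nu \subseteq \mathbb{S}^d$ and $V^\nu \equiv c$ there,
\begin{equation}
\mathcal{I}(\nu) = \int V^{\nu}(\PT{x})\, d\nu(\PT{x}) = c \int d\nu(\PT{x}) = c\,\nu(\mathbb{S}^d) = 0,\nonumber
\end{equation}
so the difference measure has vanishing energy.

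The crux is then the positive-definiteness of the Riesz kernel: for the order $s = d-1$ in the ambient space $\mathbb{R}^{d+1}$ one is in the range $0 < s < d+1$ in which the kernel $|\PT{x}-\PT{y}|^{-(d-1)}$ is strictly positive definite, so that $\mathcal{I}(\mu) \geq 0$ for every finite signed measure $\mu$ of finite energy, with equality if and only if $\mu \equiv 0$. Applying this to $\nu$ forces $\nu \equiv 0$, whence $\eta_Q = \eta_Q'$ (and incidentally $c = 0$, so the equilibrium constant is unique as well). I would establish the positivity either from the Fourier-transform representation of the energy in $\mathbb{R}^{d+1}$, under which $\mathcal{I}(\nu)$ becomes a weighted integral of $|\widehat{\nu}|^2$ against a strictly positive weight, or---since $\nu$ is supported on the sphere---from the Gegenbauer (ultraspherical) expansion of the kernel restricted to $\mathbb{S}^d$, whose coefficients are all positive.

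The main obstacle is precisely this last ingredient: making the energy principle rigorous requires controlling the finiteness of the energies of $\eta_Q$ and $\eta_Q'$ (so that the double integral defining $\mathcal{I}(\nu)$, the use of Fubini to identify it with $\int V^\nu\, d\nu$, and the Fourier/Gegenbauer manipulations are all legitimate) and invoking the \emph{strict}---not merely weak---positive definiteness that forces a zero-energy measure to vanish. For a clean write-up I would cite the standard energy-principle results of classical potential theory, or the sphere-specific statement already available in \cite{BrDrSa2009}, rather than reprove strict positive definiteness from scratch; the remaining bookkeeping (the mass cancellation and the computation $\mathcal{I}(\nu)=0$) is then routine.
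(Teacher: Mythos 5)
Your proposal is correct, and its skeleton coincides with the paper's: form $\nu = \eta_Q - \eta_Q'$, observe $\nu(\mathbb{S}^d) = 0$, deduce that $V^{\nu}$ is constant on $\mathbb{S}^d$ (the external field cancels), hence $\mathcal{I}(\nu) = 0$, and conclude $\nu \equiv 0$ from strict positive definiteness of the kernel $|\PT{x}-\PT{y}|^{-(d-1)}$. The genuine difference lies in how that last ingredient is secured. You would quote the classical energy principle, justified via the Fourier transform in $\mathbb{R}^{d+1}$ or a Gegenbauer expansion on the sphere, whereas the paper makes the step self-contained: Lemma \ref{PresLemma} establishes the composition (semigroup) identity \eqref{convolution}, namely $C_d\,|\PT{z}_1-\PT{z}_2|^{1-d} = \int_{\mathbb{R}^{d+1}} |\PT{t}-\PT{z}_1|^{-d}\,|\PT{t}-\PT{z}_2|^{-d}\,d\PT{t}$ (proved via the Funk--Hecke formula and hypergeometric evaluations), which rewrites the energy as the manifestly nonnegative quantity $\frac{1}{C_d}\int_{\mathbb{R}^{d+1}} \bigl[ \int |\PT{x}-\PT{z}|^{-d}\, d\nu(\PT{x}) \bigr]^2 d\PT{z}$; vanishing energy then forces the order-$d$ Riesz potential of $\nu$ to vanish a.e.\ in $\mathbb{R}^{d+1}$, and a density theorem of Landkof (linear combinations of $\{|\PT{x}-\PT{z}|^{-d}\}_{\PT{z}}$ are dense in $C(\mathbb{S}^d)$) yields $\nu \equiv 0$. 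Your Fourier variant is, in essence, the same square-completion carried out on the transform side, so the two routes are mathematically close: yours is shorter if one is content to cite the energy principle, while the paper's version produces an explicit constant $C_d$, avoids distributional Fourier transforms of measures, and needs only classical special-function identities. The integrability caveat you flag---finiteness of energies and the Fubini interchange behind $\mathcal{I}(\nu) = \int V^{\nu}\, d\nu$---applies equally to the published proof, which dispatches it with the one-line remark that signed equilibria have a.e.\ finite potentials; so your acknowledged obstacle is handled no more rigorously in the paper itself, and your plan, once the cited energy principle is in place, is complete.
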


Here, we are concerned with the external field generated by a
positive point charge of amount $q$ located at $\PT{a}=R\PT{p}$, $R>1$, where $\PT{p}$ is the North Pole of $\mathbb{S}^d$; that is, 
$\PT{p}=(0,0,\dots,0,1)$. Such a field is given by
\begin{equation} \label{externalfield} 
Q_{\PT{a},q}(\PT{x}) \DEF \frac{q}{|\PT{x}-\PT{a}|^{d-1}}, \qquad \PT{x}\in
\mathbb{R}^{d+1},\end{equation} 
and the associated signed
equilibrium is described in the next result. {(For the general result, see \cite{BrDrSa2009}.)}

\begin{lem}
The signed equilibrium $\eta_{Q}$ associated with the external
field $Q = Q_{\PT{a},q}$ of \eqref{externalfield} is absolutely
continuous with respect to the (normalized) surface area measure on
$\mathbb{S}^d$; that is, $d \eta_Q(\PT{x}) =
\eta_{R,q}^\prime(\PT{x}) d \sigma_d(\PT{x})$, and its density is
given by
\begin{equation} \label{eta.Q}
\eta_{R,q}^\prime(\PT{x}) = 1 + \frac{q}{R^{d-1}} -
\frac{q\left(R^2-1\right)}{\left|\PT{x}-\PT{a}\right|^{d+1}}{,}
\qquad \PT{x}\in \mathbb{S}^d.
\end{equation}
\end{lem}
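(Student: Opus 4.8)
The plan is to realize the signed equilibrium as an explicit combination of the uniform measure $\sigma_d$ from Lemma~\ref{L1} and the \emph{balayage} (sweep-out) onto $\mathbb{S}^d$ of the unit point mass $\delta_{\PT a}$ located at $\PT a = R\PT p$. Write $\nu \DEF \bal(\delta_{\PT a},\mathbb{S}^d)$ for this balayage measure, which is characterized by the property that $V^{\nu}(\PT x) = V^{\delta_{\PT a}}(\PT x) = |\PT x - \PT a|^{1-d}$ for every $\PT x$ in the closed unit ball, in particular on $\mathbb{S}^d$. Since the external field is exactly $Q(\PT x) = q\,V^{\delta_{\PT a}}(\PT x)$, on the sphere we have $Q(\PT x) = q\,V^{\nu}(\PT x)$, so the equilibrium condition $V^{\eta_Q}+Q = F_Q$ becomes
\[
V^{\eta_Q + q\nu}(\PT x) = F_Q \qquad\text{for all }\PT x\in\mathbb{S}^d .
\]

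Thus $\eta_Q + q\nu$ is a (signed) measure whose potential is constant on $\mathbb{S}^d$. By Lemma~\ref{L1} the measure $\sigma_d$ has this property, and by the uniqueness statement of Proposition~\ref{prop:uniqueness} (applied with $Q\equiv 0$) any mass-one measure with constant potential on $\mathbb{S}^d$ must equal $\sigma_d$; scaling and comparing total masses gives $\eta_Q + q\nu = \bigl(1 + q\,\nu(\mathbb{S}^d)\bigr)\sigma_d$, i.e.
\[
\eta_Q = \bigl(1 + q\,\nu(\mathbb{S}^d)\bigr)\sigma_d - q\,\nu .
\]
The total mass of $\nu$ is cheap to obtain: for any measure $\mu$ supported on $\mathbb{S}^d$ one has $V^{\mu}(\PT 0) = \int |\PT 0 - \PT y|^{1-d}\,d\mu(\PT y) = \mu(\mathbb{S}^d)$, since $|\PT y| = 1$ on the sphere; applying this to $\nu$ and using the balayage identity at the center, $V^{\nu}(\PT 0) = V^{\delta_{\PT a}}(\PT 0) = R^{1-d}$, yields $\nu(\mathbb{S}^d) = R^{1-d}$. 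This already reproduces the first two terms of \eqref{eta.Q} and identifies the constant $F_Q = 1 + q/R^{d-1}$.

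The main work — and the only real obstacle — is to show that $\nu$ is absolutely continuous with density
\[
\frac{d\nu}{d\sigma_d}(\PT x) = \frac{R^2-1}{|\PT x - \PT a|^{d+1}},
\]
the Poisson kernel for the exterior of the ball. I would establish this by the method of images: reflect $\PT a$ to its image $\PT a^\ast \DEF \PT a/R^2$ inside the sphere and exploit the identity $|\PT x - \PT a|^2 = R^2\,|\PT x - \PT a^\ast|^2$ for $\PT x\in\mathbb{S}^d$ (the same Kelvin-type identity invoked in the proof of Lemma~\ref{L1}), and then verify via the classical Poisson integral for the ball that the measure with the stated density has Newtonian potential agreeing with $|\PT x - \PT a|^{1-d}$ throughout the closed unit ball; uniqueness of the balayage then pins it down as $\nu$. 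Granting this density, $\eta_Q = (1 + qR^{1-d})\sigma_d - q\,\nu$ has exactly the density claimed in \eqref{eta.Q}.

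As a sanity check one can sidestep the general balayage machinery and verify the conclusion directly: taking the displayed expression as a candidate density, Lemma~\ref{L1} shows the constant part $(1+qR^{1-d})$ contributes $(1+qR^{1-d})$ to the potential on $\mathbb{S}^d$, while the remaining term, being $-q\,\nu$, contributes $-q\,V^{\delta_{\PT a}}(\PT x) = -Q(\PT x)$ there, so that $V^{\eta_Q}(\PT x) + Q(\PT x) \equiv 1 + qR^{1-d}$ is constant; total mass one follows from $\nu(\mathbb{S}^d) = R^{1-d}$ together with $\sigma_d(\mathbb{S}^d)=1$. The efficiency of the argument hinges entirely on having the Poisson-kernel density in hand, which is why that identification is the crux.
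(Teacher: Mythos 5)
Your argument is correct, and its logical organization differs from the paper's. The paper proceeds by direct verification: it takes the density in \eqref{eta.Q} as given, shows via the Poisson integral formula \eqref{PoissonIdentity} and the Kelvin identity $|R\PT{y}-\PT{x}|=|\PT{y}-R\PT{x}|$ (for $|\PT{x}|=|\PT{y}|=1$) that the resulting measure has constant weighted potential $1+q/R^{d-1}$, and obtains the total mass by applying the Poisson formula to the constant function $q/R^{d-1}$. You instead \emph{derive} the form of $\eta_Q$ a priori: writing $\nu \DEF \bal(\delta_{\PT a},\mathbb{S}^d)$, you observe that $\eta_Q + q\nu$ has constant potential on $\mathbb{S}^d$, invoke Proposition~\ref{prop:uniqueness} (with $Q\equiv 0$, after normalizing — legitimate, since $1+q\nu(\mathbb{S}^d)>0$ and the appendix proof is independent of this lemma) to conclude $\eta_Q = \bigl(1+q\nu(\mathbb{S}^d)\bigr)\sigma_d - q\nu$, and compute the mass by the neat observation $V^{\nu}(\PT 0)=\nu(\mathbb{S}^d)$ combined with $V^{\nu}(\PT 0)=R^{1-d}$ — slicker than the paper's constant-function trick, and you rightly do not assume Newtonian balayage in $\mathbb{R}^{d+1}$ preserves mass (it does not; indeed $R^{1-d}<1$). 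This balayage framing is in fact the viewpoint of the companion paper \cite{BrDrSa2009} that the text cites for the general result. That said, your crux — identifying the density of $\bal(\delta_{\PT a},\mathbb{S}^d)$ as the exterior Poisson kernel $(R^2-1)/|\PT x-\PT a|^{d+1}$ — is left as a sketch, and when carried out it is precisely the paper's computation \eqref{PoissonComp}: the image point $\PT a^\ast=\PT a/R^2=\PT p/R$ is the paper's $\PT z$, and one still needs the monotone-convergence argument to extend the Poisson reproducing identity to the boundary case $\PT x\in\mathbb{S}^d$, where the integrand $|\PT y-\PT x|^{1-d}$ is singular. So the two proofs share the same analytic core; what your route buys is conceptual clarity (the equilibrium is visibly uniform measure minus a swept-out point charge, explaining the structure of \eqref{eta.Q} rather than verifying it), at the cost of invoking uniqueness and balayage machinery up front.
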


For the classical Coulomb case this relation is well-known from
elementary physics (cf. \cite[p.~61]{Ja1998}).

\begin{proof} We note first that if $|\PT{x}|>1$, then the function 
$u(\PT{z}) \DEF |\PT{z}-\PT{x}|^{1-d}$ is harmonic in $|\PT{z}|\leq 1$, 
and since the Poisson integral formula \cite{He2009} preserves harmonic functions, we have
\begin{equation} \label{PoissonIdentity} \frac{1}{|\PT{z}-\PT{x}|^{d-1}}=\int_{\mathbb{S}^d} \frac{1}{|\PT{y}-
\PT{x}|^{d-1}}\frac{1-|\PT{z}|^2}{|\PT{z}-\PT{y}|^{d+1}}\,d\sigma_d(\PT{y}), \qquad | \PT{z} | < 1.
\end{equation}
Using a monotone convergence theorem argument as in Lemma \ref{L1}
we extend \eqref{PoissonIdentity} to $|\PT{x}|=1$. We shall make use of the identity 
$|R\PT{y}-\PT{x}|=|\PT{y}-R\PT{x}|$ for any $|\PT{x}|=|\PT{y}|=1$. Then from \eqref{PoissonIdentity} we obtain with $\PT{z}=\PT{p}/R$ and $| \PT{x} | = 1$,
\begin{equation}\label{PoissonComp}
\begin{aligned}
Q( \PT{x} ) = \frac{q}{|R\PT{p}-\PT{x}|^{d-1}}=&\frac{q}{|\PT{p}-R\PT{x}|^{d-1}}=\frac{q}{R^{d-1}|\PT{z}-\PT{x}|^{d-1}}=\int_{\mathbb{S}^d} \frac{q}{R^{d-1}|\PT{y}-
\PT{x}|^{d-1}}\frac{1-|\PT{z}|^2}{|\PT{z}-\PT{y}|^{d+1}}\,d\sigma_d(\PT{y})\\
=&\int_{\mathbb{S}^d} \frac{q}{|\PT{y}-
\PT{x}|^{d-1}}\frac{R^2-1}{|\PT{p}-R\PT{y}|^{d+1}}\,d\sigma_d(\PT{y})=
\int_{\mathbb{S}^d} \frac{1}{|\PT{y}-
\PT{x}|^{d-1}}\frac{q(R^2-1)}{|\PT{a}-\PT{y}|^{d+1}}\,d\sigma_d(\PT{y}).
\end{aligned}
\end{equation}
The Poisson integral formula applied to the constant function $u(\PT{z}) \equiv q/R^{d-1}$ and $\PT{z}=\PT{p}/R$ yields
\begin{equation*}
\frac{q}{R^{d-1}}=\int_{\mathbb{S}^d} \frac{q(R^2-1)}{|\PT{a}-\PT{y}|^{d+1}}\,d\sigma_d(\PT{y}).
\end{equation*}
This together with \eqref{PoissonComp} implies that for $d \nu \DEF \eta_{R,q}^\prime d \sigma_d$ we have
\begin{equation*}
\int_{\mathbb{S}^d}\, d \nu = 1 \qquad \text{and} \qquad V^{\nu}(\PT{x})+Q(\PT{x})=1+\frac{q}{R^{d-1}}, \quad \PT{x} \in \mathbb{S}^d,
\end{equation*}
which proves the lemma.
\end{proof}

\begin{rmk}
A surprising aspect of the equilibrium support, which varies with $R$, is illustrated in Figure~\ref{fig1} for $\mathbb{S}^2$ and $q=1$. 
Writing $\eta_{Q_{\PT{a},1}} = \eta_{Q} = \eta_{Q}^+ - \eta_{Q}^-$ in terms of its Jordan decomposition, it is clear on geometrical and physical grounds that the support of $\eta_Q^+$ is a spherical cap centered at the South Pole, while its complement (a spherical cap centered at the North Pole) is the support of $\eta_Q^-$, see Figure~\ref{fig1a}. Although somewhat counter-intuitive, we see from Figure~\ref{fig1b} that, as the unit charge approaches the sphere ($R \to 1$), from a certain 
distance on, the support of $\eta_Q^+$ occupies increasingly more of
the sphere. This phenomenon can be explained by the geometry of the sphere, where an increasingly 'needle-like'
negative sink forming at the North Pole is balanced by an increasingly
uniform positive part so that the total charge is always one.
\end{rmk}

\begin{figure}[h!t]
\begin{center}
\subfloat[]{\label{fig1a}\includegraphics[scale=.5]{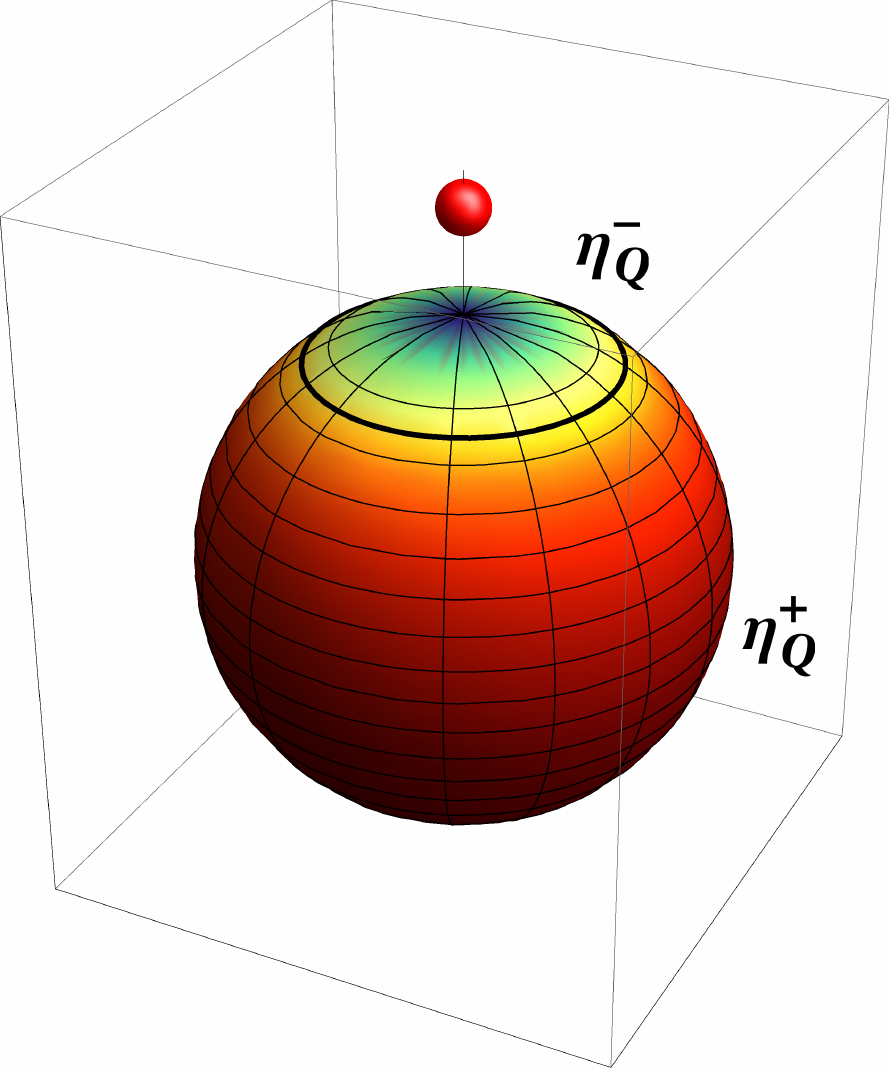}} 
\hfill
\subfloat[]{\label{fig1b}\includegraphics[scale=1]{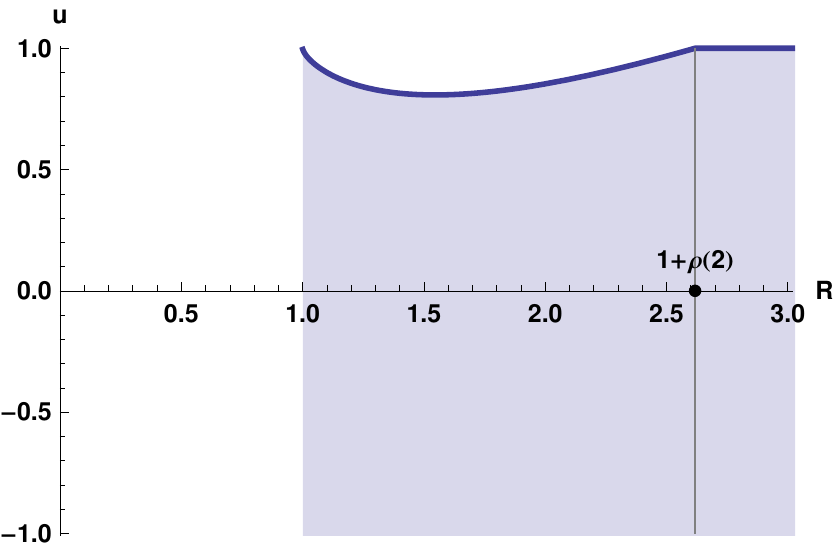}}
\hfill
\caption{\label{fig1} Signed equilibrium $\eta_Q$ on $\mathbb{S}^2$
(left) and projection onto the polar axis of the positive part $\eta_Q^+$.}
\end{center}
\end{figure}

From formula \eqref{eta.Q} we observe that the  minimum value of the density
$\eta_{R,q}^\prime(\PT{x})$ is attained at the North Pole $\PT{p}$,
and its value there is
\begin{equation}\label{eta:eq}
\eta_{R,q}^\prime(\PT{p}) = 1 + \frac{q}{R^{d-1}} -
\frac{q\left(R^2-1\right)}{\left|\PT{a}-\PT{p}\right|^{d+1}} = 1 +
\frac{q}{R^{d-1}} -
\frac{q\left(R^2-1\right)}{\left(R-1\right)^{d+1}}.
\end{equation}

This leads to the following theorem.

\begin{thm} \label{thm:Gonchar}
For the external field $Q_{\PT{a},q} (\PT{x})$ of
\eqref{externalfield} with $\PT{a}=R\PT{p}$, the signed equilibrium
is a positive measure on all of $\mathbb{S}^d$ if and only if $R\geq
R_q$, where $R_q$ is the unique (real) zero in $(1,+\infty)$ of the
polynomial
\begin{equation} \label{equation}
G(d,q;z) \DEF \left[ \left( z - 1 \right)^d / q - z - 1 \right] z^{d-1} + \left( z - 1 \right)^d.
\end{equation}
In particular, the solution to Gonchar's problem is given by
$\rho(d)=R_1 -1$.
\end{thm}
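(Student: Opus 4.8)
The plan is to reduce global positivity of the signed equilibrium to a single scalar inequality, recognize that inequality as $G(d,q;R)\ge 0$, and then analyze $G$ as $R$ varies to extract the threshold $R_q$. First I would note that by \eqref{eta.Q} the density $\eta_{R,q}^\prime(\PT{x})$ depends on $\PT{x}$ only through the distance $|\PT{x}-\PT{a}|$, and since $R>1$ the subtracted term $q(R^2-1)/|\PT{x}-\PT{a}|^{d+1}$ is strictly decreasing in that distance; hence the density attains its minimum precisely where $\PT{x}$ is closest to $\PT{a}=R\PT{p}$, namely at the North Pole $\PT{p}$, with $|\PT{p}-\PT{a}|=R-1$. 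Consequently the signed equilibrium is a positive measure on all of $\mathbb{S}^d$ if and only if $\eta_{R,q}^\prime(\PT{p})\ge 0$, the value already recorded in \eqref{eta:eq}, which using $R^2-1=(R-1)(R+1)$ I would rewrite as
\begin{equation*}
g(R)\DEF 1+\frac{q}{R^{d-1}}-\frac{q(R+1)}{(R-1)^{d}},\qquad R>1.
\end{equation*}

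Next I would connect $g$ to the polynomial \eqref{equation}. Multiplying by the positive factor $R^{d-1}(R-1)^{d}$ gives
\begin{equation*}
R^{d-1}(R-1)^{d}\,g(R)=R^{d-1}(R-1)^{d}-q(R+1)R^{d-1}+q(R-1)^{d}=q\,G(d,q;R),
\end{equation*}
so that $\sgn g(R)=\sgn G(d,q;R)$ for every $R>1$. Thus the measure is positive on $\mathbb{S}^d$ exactly when $G(d,q;R)\ge 0$, and it remains to show that $G(d,q;\cdot)$ has a \emph{unique} zero $R_q$ in $(1,+\infty)$ with $G(d,q;R)\ge 0\iff R\ge R_q$.

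For this I would work with $g$ directly, writing $g(R)=1-q\,\phi(R)$ where $\phi(R)\DEF(R+1)(R-1)^{-d}-R^{1-d}$, so that $g(R)\ge 0\iff\phi(R)\le 1/q$. The endpoint behavior is immediate: $\phi(R)\to+\infty$ as $R\to1^{+}$, while a short expansion shows $\phi(R)\to 0^{+}$ as $R\to\infty$. The crux is that $\phi$ is strictly decreasing on $(1,\infty)$: after differentiating and clearing positive denominators, $\phi^\prime(R)<0$ reduces to $(d-1)(R-1)^{d+1}/R^{d}<(d-1)R+(d+1)$, and since $(R-1)^{d+1}/R^{d}=(R-1)(1-1/R)^{d}<R-1<R$ the left-hand side is dominated by $(d-1)R$, so the inequality holds for all $d\ge 1$ and $R>1$. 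Hence $\phi$ is a strictly decreasing bijection of $(1,\infty)$ onto $(0,\infty)$, so for each $q>0$ there is a unique $R_q\in(1,\infty)$ with $\phi(R_q)=1/q$, i.e.\ a unique zero of $G(d,q;\cdot)$ there, and $g(R)\ge 0\iff R\ge R_q$. Taking $q=1$ (where $G(d,1;z)$ is the polynomial \eqref{polynomial.G}) identifies $\rho(d)=R_1-1$ as the critical distance from the point charge to the sphere.

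I expect the monotonicity of $\phi$ to be the only genuine obstacle; the localization of the minimum at $\PT{p}$ and the identification with $G$ are respectively a geometric observation and routine algebra. Should the monotonicity inequality become awkward for general $d$, a fallback is to bound the number of zeros of $G(d,q;\cdot)$ in $(1,\infty)$ directly (for instance by a sign analysis of $g^\prime$ after the substitution $R=1+t$), but the elementary estimate $(R-1)^{d+1}/R^{d}<R$ appears to make the direct route the cleanest.
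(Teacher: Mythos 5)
Your proof is correct, and its first half---locating the minimum of the density \eqref{eta.Q} at the North Pole (the subtracted term is largest where $|\PT{x}-\PT{a}|$ is smallest) and converting $\eta_{R,q}^\prime(\PT{p})\geq 0$ into $G(d,q;R)\geq 0$ via the identity $R^{d-1}(R-1)^d\,g(R)=q\,G(d,q;R)$---coincides with the paper's reduction, which simply reads the condition off \eqref{eta:eq}. Where you genuinely diverge is the existence/uniqueness step. The paper gets existence from the sign change ($G(d,q;1)=-2<0$ while $G(d,q;x)\to+\infty$ as $x\to\infty$) and uniqueness from Descartes' Rule of Signs applied to the shifted polynomial $G(d,q;1+w)=\frac{1}{q}\sum_{m=0}^{d-1}\binom{d-1}{m}w^{m+d}-\sum_{m=0}^{d-1}\left[\binom{d}{m}+\binom{d-1}{m}\right]w^m$, whose coefficient sequence has exactly one sign change and hence at most one root $w>0$. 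You instead prove that $\phi(R)=(R+1)(R-1)^{-d}-R^{1-d}$ is strictly decreasing on $(1,\infty)$ with limits $+\infty$ and $0^{+}$; I checked your calculus: $\phi^\prime(R)<0$ does reduce, after multiplying by $(R-1)^{d+1}$, to $(d-1)(R-1)^{d+1}/R^{d}<(d-1)R+(d+1)$, and the bound $(R-1)(1-1/R)^{d}<R-1$ settles it for all $d\geq 1$ (trivially when $d=1$), while strict decrease together with the limit $0$ at infinity also supplies the positivity of $\phi$ needed for surjectivity onto $(0,\infty)$. As for what each approach buys: the paper's Descartes argument is essentially one line once the shifted expansion is written down, and the sign pattern is visibly independent of $q$; your monotonicity argument costs a derivative computation but yields strictly more---the minimal density $g(R)=1-q\,\phi(R)$ is seen to be strictly increasing in $R$, so the threshold statement ``positive measure if and only if $R\geq R_q$'' falls out directly rather than being inferred from the unique sign change, and the representation $R_q=\phi^{-1}(1/q)$ exhibits the critical distance as a strictly increasing function of the charge $q$, a monotone dependence that the paper's proof does not directly provide.
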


\begin{proof} 
We use the fact that the support of $\eta_{Q_{\PT{a},q}}$ is all of $\mathbb{S}^d$ if and only if $\eta_{R,q}^\prime(\PT{p}) \geq 0$ in \eqref{eta:eq} or equivalently that $G(d,q;R) \geq 0$. Hence we seek the number $R > 1$ such that $G(d,q;R) = 0$. Observing that $G(d,q;1) < 0$ and $G(d,q;x) > 0$ for $x>1$ sufficiently large, there exists at least one value $R_q$ such that $G(d,q; R_q) = 0$. Moreover, this root in $(1, \infty)$ is unique as can be seen by applying Descartes' Rule of Signs to 
\begin{equation*} 
G(d,q; 1 + w) = \frac{1}{q}\sum_{m=0}^{d-1} \binom{d-1}{m} w^{m+d} - \sum_{m=0}^{d-1} \left[ \binom{d}{m} + \binom{d-1}{m} \right] w^m.
\end{equation*}
\end{proof}

Curiously, for $d=2$ and $d=4$ we obtain the Golden
ratio and the Plastic number as answers to Gonchar's problem as mentioned in the Introduction.
Furthermore, for large values of $d$ the asymptotic analysis (provided in the Appendix) shows that
\begin{equation} \label{AsymptoticEquation}
R_q = 2 + \left[ \log (3 q) \right] / d + \mathcal{O}(1/d^2) \qquad
\text{as $d\to\infty$.}
\end{equation}
The appearance of $2$ as the limit as $d\to\infty$ of the
critical distances $R_q$ ($q>0$ fixed) can be explained by studying the asymptotic behavior of
$G(d,q; R)$ as $d \to \infty$. Indeed, for $R \geq 2 + \eps$ ($\eps > 0$
fixed) it goes to $+\infty$ as $d\to\infty$, while for
$1 < R \leq 2 - \eps$ ($\eps > 0$ fixed) it goes
to $-\infty$, leaving the number $2$ as the only
candidate.

\section{The polynomials $G(d;z)$}
\label{sec2}

In the following we investigate the family of Gonchar polynomials $G(d,z) = G(d,1;z)$ given in \eqref{polynomial.G}. Aside from the solution to Gonchar's problem, these polynomials are interesting in themselves and their distinctive properties merit further study.

A polynomial $P$ with real coefficients is called {\em
(self-)reciprocal} if its {\em reciprocal polynomial} $P^*(z) \DEF
z^{\deg P} P(1/z)$ coincides with $P(z)$. In other words, the
coefficients of $z^k$ and of $z^{\deg P - k}$ in $P(z)$ are the
same. Notice that {\em $G(d; z)$ is self-reciprocal} for even $d$ since
\begin{equation*}
z^{2 d - 1} G(d; 1/z) = \left[ \left( 1 - z \right)^d - z - 1
\right] z^{d-1} + \left( 1 - z \right)^d.
\end{equation*}
Consequently, if $\zeta$ is a zero of $G(d; z)$, then so is $1 /
\zeta$ for even $d$. For odd $d$ we infer from
\begin{equation*}
G(d; z) + G^*(d; z) = \left[ 1 + ( -1 )^d \right] \left( z - 1
\right)^d \left( z^{d-1} + 1 \right) - 2 \left( z^d + z^{d-1}
\right) = - 2 \left( z^d + z^{d-1} \right)
\end{equation*}
that the coefficients of $z^k$ and $z^{2 d - 1 - k}$ in $G(d; z)$
sum to zero except for the 'innermost' pair.

\subsection{Factorizations and Irreducibility}
With the aid of symbolic computation programs one can find factorizations and check irreducibility of explicitly given polynomials. For $d = 1, 2, \dots, 7$, we thereby obtain the following factorizations over the integers of $G(d; z)$.
\begin{align*}
G(1; z) &= z-3, \\
G(2; z) &= (z+1) \left(z^2-3 z+1\right), \\
G(3; z) &= z^5-3 z^4+3 z^3-5 z^2+3 z-1, \\
G(4; z) &= (z+1) \left(z^3-3 z^2+2 z-1\right) \left(z^3-2 z^2+3 z-1\right), \\
G(5; z) &= z^9-5 z^8+10 z^7-10 z^6+5 z^5-7 z^4+10 z^3-10 z^2+5 z-1, \\
G(6; z) &= (z+1) \left(z^2-z+1\right) \left(z^8-6 z^7+15 z^6-21 z^5+21 z^4-21 z^3+15 z^2-6 z+1\right), \\
G(7; z) &= z^{13}-7 z^{12}+21 z^{11}-35 z^{10}+35 z^9-21 z^8+7 z^7-9 z^6+21 z^5-35 z^4+35 z^3-21 z^2+7 z-1. \\
\end{align*}

One can easily verify that $(z + 1)$ divides $G(d; z)$ if and only if $d$ is even. Furthermore, the factor $z^2 - z + 1$ arises in the following cases.
\begin{prop} \label{prop:cyclot.divisor}
The cyclotomic polynomial $z^2-z+1$ divides $G(d;z)$ if and only if $6$ divides $d$.
\end{prop}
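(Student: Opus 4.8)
The plan is to exploit the fact that $z^2 - z + 1 = \Phi_6(z)$ is the sixth cyclotomic polynomial, whose zeros are the primitive sixth roots of unity $\zeta = e^{\pm \ii \pi / 3}$. Since $\Phi_6$ is monic with integer coefficients and irreducible over $\Q$ (it is the minimal polynomial of $\zeta$), it divides $G(d;z)$ in $\Z[z]$ if and only if $\zeta$ is a zero of $G(d;z)$; and because $G(d;z)$ has real coefficients, its two roots occur as a conjugate pair, so it suffices to test $G(d;\zeta) = 0$ for the single value $\zeta \DEF e^{\ii \pi / 3}$. This turns the divisibility assertion into the evaluation of $G(d;z)$ at one point.

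The decisive simplification is that $\zeta$ satisfies $\zeta^2 - \zeta + 1 = 0$, i.e.\ $\zeta - 1 = \zeta^2$, whence $(\zeta - 1)^d = \zeta^{2d}$. Substituting this into \eqref{polynomial.G} and expanding gives
\begin{equation*}
G(d;\zeta) = \left[ \zeta^{2d} - \zeta - 1 \right] \zeta^{d-1} + \zeta^{2d} = \zeta^{3d-1} + \zeta^{2d} - \zeta^{d} - \zeta^{d-1},
\end{equation*}
a sum of four powers of $\zeta$. Since $\zeta^6 = 1$, each exponent may be reduced modulo $6$, so the right-hand side depends only on the residue $d \bmod 6$ and is thus periodic in $d$ with period $6$. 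It then remains to settle the six residue classes, which I would organize using $\zeta^3 = -1$ and $\zeta^2 = \zeta - 1$ to express every power in the $\Q$-basis $\{1, \zeta\}$ of $\Q(\zeta)$ and read off the rational coordinates of $G(d;\zeta)$.

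A convenient shortcut disposes of half the work: when $3 \mid d$ one has $\zeta^{2d} = 1$ and $\zeta^{3d} = (-1)^d = \zeta^d$, so the outer two terms $\zeta^{3d-1}$ and $\zeta^{d-1}$ cancel and $G(d;\zeta) = 1 - \zeta^d$; this vanishes exactly when $\zeta^d = 1$, i.e.\ (given $3 \mid d$) precisely when $6 \mid d$, and equals $2$ in the remaining case $d \equiv 3 \pmod 6$. For $3 \nmid d$ a one-line computation in each of the four residues $d \equiv 1, 2, 4, 5 \pmod 6$ produces values such as $\zeta - 3$, $2 - 4\zeta$, $\zeta + 1$, and $2\zeta - 2$, all nonzero because $\zeta \notin \Q$. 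The only point requiring genuine care is the ``only if'' direction, namely confirming nonvanishing in every nonzero residue class; the basis representation makes this routine, since a single nonzero coordinate already certifies $G(d;\zeta) \neq 0$.
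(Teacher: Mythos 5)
Your proposal is correct and takes essentially the same approach as the paper: the paper's proof likewise evaluates $G(d;\zeta)$ at a root $\zeta$ of $z^2-z+1$ using the relations $\zeta^2 = \zeta - 1$ and $\zeta^3 = -1$, concluding that $G(d;\zeta) = 0$ precisely when $d \equiv 0 \pmod 6$. You have merely written out in full (and correctly, including the residue-class values $\zeta-3$, $2-4\zeta$, $\zeta+1$, $2\zeta-2$) the case check that the paper compresses into ``it readily follows.''
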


\begin{proof}
Note that if $\zeta$ is a zero of $z^2-z+1$, then $\zeta^2=\zeta-1$ and $\zeta^3 = -1$. Using formula \eqref{polynomial.G}, it readily follows that $G(d;\zeta)=0$ whenever $d\equiv 0 \pmod{6}$ and $G(d;\zeta) \neq 0$ otherwise.
\end{proof}

A general irreducibility result has so far eluded the authors. The Eisenstein
criterion and, in general, reduction to finite fields, seem not to be effective tools for studying the polynomials $G(d;z)$. However, using Mathematica, we verified the following conjecture for $d$ up to $500$.

\begin{conj}
Set $\ell(d; z) \equiv 1$ for $d$ odd, $\ell(d; z) \DEF z + 1$ for $d$
even but not divisible by $6$, and $\ell(d; z) \DEF ( z + 1 ) (
z^2 - z + 1 )$ if $6$ divides $d$. Then $G(d; z) / \ell(d; z)$ is
irreducible over the rationals except for $d = 4$, $8$ and $12$.
\end{conj}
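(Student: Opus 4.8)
The plan is to prove that the cofactor $H(d; z) \DEF G(d; z) / \ell(d; z)$ — which is monic of degree $2d - 1 - \deg \ell(d; z)$ with integer coefficients once the known linear and quadratic cyclotomic factors are stripped off using the divisibility facts already established — is irreducible over $\Q$; by Gauss's lemma this is equivalent to irreducibility over $\Z$. Since the exceptional set $\{4, 8, 12\}$ is finite, the true content is a statement valid for all sufficiently large $d$, with the finitely many remaining small cases dispatched by direct computation. Because Eisenstein's criterion and naive reduction modulo a prime are ineffective here, I would base the argument on $p$-adic Newton polygons, exploiting the explicit shifted expansion
\begin{equation*}
G(d; 1 + w) = \sum_{m=0}^{d-1} \binom{d-1}{m} w^{m+d} - \sum_{m=0}^{d-1} \left[ \binom{d}{m} + \binom{d-1}{m} \right] w^m,
\end{equation*}
whose constant term is $-2$ and whose leading coefficient is $1$. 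The coefficients are pure binomials, so their $p$-adic valuations are given by Kummer's carry count — exactly the kind of explicit input that makes Newton-polygon arguments tractable.

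The mechanism would run as follows. A single full segment of slope $-1/\deg H$ is not available: that configuration is literally the Eisenstein condition, which is known to fail for these polynomials. Instead I would follow the philosophy of Filaseta–Trifonov's treatment of the Bessel polynomials: for each candidate degree $k$ of a would-be proper factor (it suffices to consider $1 \le k \le \tfrac{1}{2}\deg H$, since one factor is at most half the degree), I would exhibit a prime $p$ whose $p$-adic Newton polygon of $G(d; 1 + w)$ has no vertex above abscissa $k$ — equivalently, the lower edge through that abscissa has non-integer height, or its slopes have denominators incompatible with a factor of size $k$. Ruling out every such $k$, with the prime $p$ allowed to vary with $k$, then forces irreducibility. (A Galois-theoretic variant is conceivable, producing a transposition and a large cycle from well-chosen primes to force transitivity, but it requires the same supply of primes.)

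The indispensable analytic ingredient is the guaranteed existence of a suitable prime for each $k$ and each large $d$, which one would extract from results on primes in short intervals (Sylvester–Schur, or stronger sieve estimates). The self-reciprocal symmetry of $G(d; z)$ for even $d$, noted above, should be folded in by passing to the polynomial of half the degree in $z + z^{-1}$; I suspect this extra structure is precisely what produces the sporadic exceptions $d = 4, 8, 12$, where the generic prime-selection step fails and a genuine factorization survives. The main obstacle is exactly this uniform control: tying the carry counts of $\binom{d}{m}$ and $\binom{d-1}{m}$ to a Newton-polygon shape that forbids every intermediate factor degree via a prime that provably exists for all but finitely many $d$, while cleanly isolating the three exceptional values. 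This combinatorial-analytic synthesis, uniform in $d$, is the crux — and is presumably why a general irreducibility proof has so far eluded the authors.
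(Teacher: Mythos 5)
You are proposing a proof of a statement that the paper itself does not prove: this is stated as a \emph{conjecture}, immediately after the authors admit that ``a general irreducibility result has so far eluded the authors,'' and its only support in the paper is machine verification for $d$ up to $500$. So there is no proof in the paper to compare against, and your text --- by your own framing (``the main obstacle is exactly this uniform control \dots\ is the crux'') --- is a research programme, not a proof. That alone settles the assessment; but it is worth recording that the central mechanism you propose cannot work as stated. The Newton polygon of a polynomial with respect to a prime $p$ is the lower convex hull of the points $\bigl(m, v_p(a_m)\bigr)$. For $G(d; 1+w)$ the constant term is $-2$ and the leading coefficient is $1$, so for \emph{every odd prime} both endpoints of the polygon sit at height $0$; since all interior valuations are nonnegative, the polygon is a single horizontal segment and carries no factorization information at all. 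Kummer's carry counts for the interior coefficients $\binom{d}{m} + \binom{d-1}{m}$ are irrelevant here: interior points can only create vertices by dipping \emph{below} the chord joining the endpoints, and that chord is already at height zero. The sole nontrivial prime is $p = 2$, where the total height of the polygon is $v_2(-2) = 1$, so there is at most one segment of negative slope $-1/j$; such a polygon excludes only factor degrees incompatible with that single short segment --- nowhere near all $1 \le k \le \tfrac{1}{2}\deg H$. This is precisely the opposite of the Bessel-polynomial setting you invoke: there the key coefficient is divisible by every prime in a long interval (which is where Sylvester--Schur enters for Filaseta), whereas here the endpoint coefficients are $\pm 1$ and $\pm 2$, so the ``supply of primes'' your plan requires does not exist; other integer shifts fare no better, since $G(d;0) = (-1)^d$ while values such as $G(d;2) = 1 - 2^d$ have uncontrolled valuation patterns across the interior coefficients.

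Two further cautions. First, your suggestion that self-reciprocity ``produces'' the exceptions does not match the actual data in the paper: for $d = 4$ the two cubic cofactors $z^3 - 3z^2 + 2z - 1$ and $z^3 - 2z^2 + 3z - 1$ are reciprocals \emph{of each other} (so the associated trace polynomial in $u = z + z^{-1}$ is irreducible even though $G/\ell$ splits), while for $d = 8$ and $d = 12$ the extra factors are individually self-reciprocal (so the trace polynomial itself factors); no single mechanism of the kind you sketch isolates all three cases uniformly. Second, even granting every step of your outline, passing from ``irreducible for all sufficiently large $d$'' to the exact exceptional set $\{4, 8, 12\}$ requires an effective threshold, which Newton-polygon and short-interval-prime arguments of this type rarely supply. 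In short: the statement remains open, and the proposed route founders at its first load-bearing step.
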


Regarding the exceptional cases we record that, in addition to $G(4; z)$ as given above,
\begin{align*}
G(8; z) &= (z+1) \left(z^4-3 z^3+3 z^2-3 z+1\right) \\
&\phantom{=}\times \left(z^{10}-6 z^9+16 z^8-24 z^7+24 z^6-21 z^5+24 z^4-24 z^3+16 z^2-6 z+1\right), \\
G(12; z) &= (z+1) \left(z^2-z+1\right) \left(z^6-4 z^5+5 z^4-3 z^3+5 z^2-4 z+1\right) \\
&\phantom{=}\times \big(z^{14}-8 z^{13}+29 z^{12}-62 z^{11}+85 z^{10}-77 z^9+48z^8 \\
&\phantom{=\times}-33 z^7+48 z^6-77 z^5+85 z^4-62 z^3+29 z^2-8
z+1\big).
\end{align*}

\subsection{Zeros of Gonchar polynomials}

Figure~\ref{fig5} illustrates some features of $G(d;z)$ on the
real line and suggests some of its general properties. Note the qualitatively
different behavior for even and odd $d$.

\begin{figure}[h!t]
\begin{center}
\includegraphics[scale=.85]{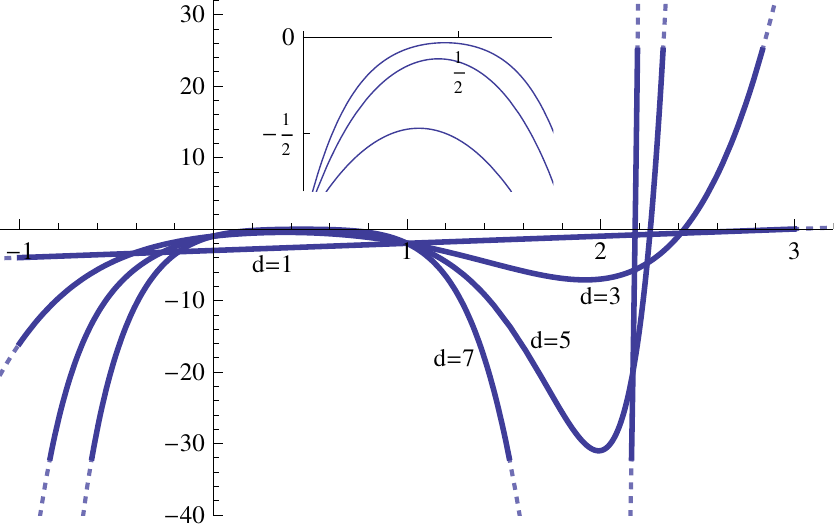}
\includegraphics[scale=.85]{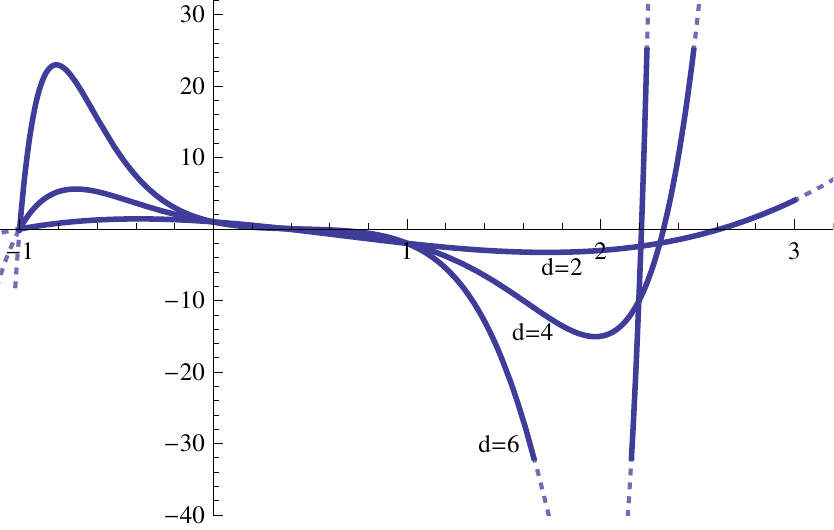}
\caption{\label{fig5} The polynomial $G(d;z)$ on the real line for
$d=1,2,\dots,7$.}
\end{center}
\end{figure}

Depending on the parity of $d$, the polynomial equation $G(d; z) =
0$ has either one ($d$ odd) or three ($d$ even) real simple roots.
More precisely, the following holds.
\begin{prop} \label{prop:real.zeros}
If $d$ is odd, then $G(d; z)$ has precisely one real zero, which is simple and lies in the interval $(2,3]$. If $d$ is even, then $G(d; z)$ has exactly three real zeros: one at $-1$, one in the interval $(1/3, 1/2)$ and one in the interval $(2, 3)$; all these zeros are simple.
\end{prop}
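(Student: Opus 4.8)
The plan is to rewrite \eqref{polynomial.G} in the convenient form
\[
G(d;z) = (z-1)^d\bigl(z^{d-1}+1\bigr) - z^d - z^{d-1},
\]
and then to treat the three ranges $z>1$, $0<z\le 1$, and $z\le 0$ separately, in each case reading off the sign of the two summands from the parity of $d$. The zero in $(2,3]$ is common to both parities and is exactly the Gonchar root of Theorem~\ref{thm:Gonchar}. Setting $q=1$ in the expansion of $G(d,q;1+w)$ displayed in that proof, the coefficients of $w^0,\dots,w^{d-1}$ are all negative and those of $w^d,\dots,w^{2d-1}$ all positive, so there is exactly one sign change; by Descartes' Rule of Signs $G(d;1+w)$ has exactly one positive zero counted with multiplicity, so $G(d;z)$ has a single, \emph{simple} zero in $(1,\infty)$. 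Since $G(d;2)=1-2^d<0$ while $G(d;3)=(2^d-4)3^{d-1}+2^d$ is positive for $d\ge 2$ and vanishes only for $d=1$, this zero lies in $(2,3)$ for $d\ge 2$ and equals $3$ when $d=1$; in all cases it lies in $(2,3]$.

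\emph{Odd $d$.} Here I would show $G(d;z)<0$ on all of $(-\infty,1]$, so the zero above is the only real one. On $(0,1]$ both summands of the rewritten $G$ are negative, since $(z-1)^d<0$, and $G(d;1)=-2$. For $z=-t$ with $t>0$ one computes
\[
G(d;-t) = -(t+1)^d\bigl(t^{d-1}+1\bigr) + t^{d-1}(t-1),
\]
and the elementary bound $(t+1)^d\bigl(t^{d-1}+1\bigr)\ge (t+1)\bigl(t^{d-1}+1\bigr)=t^d+t^{d-1}+t+1$ forces the right-hand side to be at most $-2t^{d-1}-t-1<0$. Together with $G(d;0)<0$, this yields $G(d;z)<0$ throughout $(-\infty,1]$ and settles the odd case.

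\emph{Even $d$.} I would assemble the three known zeros and rule out all others. The factor $(z+1)$ divides $G(d;z)$, giving the zero at $-1$; its simplicity follows from the derivative computation $G'(d;-1)=(d-1)2^d+1\ne 0$. Since $G(d;z)$ is self-reciprocal for even $d$, the map $z\mapsto 1/z$ sends the unique simple zero $R\in(2,3)$ to a second simple zero $1/R\in(1/3,1/2)$ and identifies the zeros in $(0,1)$ bijectively with those in $(1,\infty)$, so $1/R$ is the only zero in $(0,1)$. It remains to control the negative axis: for $z=-t$ one finds
\[
G(d;-t) = (t+1)^d\bigl(1-t^{d-1}\bigr) + t^{d-1}(1-t),
\]
which is strictly positive for $0<t<1$ and strictly negative for $t>1$, so $t=1$ (that is, $z=-1$) is the only negative zero. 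With $G(d;0)=1\ne 0$ and $G(d;1)=-2\ne 0$, this accounts for exactly three real zeros, all simple.

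The step I expect to be the main obstacle is the control of the zeros \emph{off} the positive ray: Descartes' rule constrains only the positive roots of the shifted polynomial, so the behavior on $(-\infty,0]$ (and, for odd $d$, on the whole ray $z\le 0$) must be pinned down by explicit sign estimates, with the crude but uniform inequality $(t+1)^d\ge t+1$ doing the work in the odd case. A secondary subtlety is simplicity: Descartes delivers it for $R$ (and hence, by reciprocity, for $1/R$), but the zero at $z=-1$ requires the separate derivative evaluation $G'(d;-1)=(d-1)2^d+1$ to exclude higher multiplicity.
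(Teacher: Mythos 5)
Your proof is correct, and it shares the paper's overall skeleton: Descartes' Rule applied to the shifted polynomial $G(d;1+w)$ (with $q=1$ in the expansion from the proof of Theorem~\ref{thm:Gonchar}) to get a unique, simple zero in $(1,\infty)$; the sign evaluations $G(d;2)=1-2^d<0$ and $G(d;3)=(2^d-4)3^{d-1}+2^d$ to localize it in $(2,3]$; and self-reciprocity for even $d$ to transport $R\in(2,3)$ to the simple zero $1/R\in(1/3,1/2)$. Where you genuinely diverge is the negative axis, and there your version is more complete than the paper's. For even $d$ the paper disposes of $(-\infty,-1)$ by asserting, without details (``by computing $G'(d;x)$ and analyzing its sign \dots\ one can show''), that $G(d;x)$ is strictly increasing there, and then invokes self-reciprocity on $(-1,0)$; your identity $G(d;-t)=(t+1)^d\bigl(1-t^{d-1}\bigr)+t^{d-1}(1-t)$ settles both rays at once by pure sign inspection, with no calculus beyond the single derivative value $G'(d;-1)=(d-1)2^d+1$ certifying simplicity at $-1$ (this value checks out; the paper records only $G'(d;-1)>0$). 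Conversely, in the odd case the paper's grouping $\bigl[(x-1)^d-x-1\bigr]x^{d-1}+(x-1)^d$ is slightly slicker than yours: both of its terms are $\le 0$ for all $x<1$ with negative sum, whereas in your grouping the tail $t^{d-1}(t-1)$ changes sign on $t>0$, forcing the auxiliary bound $(t+1)^d\ge t+1$ --- valid, but an extra step. Net effect: same architecture, with your negative-axis factorization supplying in closed form the one step the paper leaves to the reader.
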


\begin{proof}
Note that for $d \geq 2$ there holds: 
\begin{equation*} 
G(d; -1) = ( -2 )^d \left[ 1 - ( -1 )^d \right], \quad G(d; 0) = (
-1 )^d, \quad G(d; 1) = -2, \quad G(d; 2) = 1 - 2^d.
\end{equation*}
The assertion of the proposition is trivial for $G(1; z) = z - 3$. So let $d \geq 2$. Since $G(d;2) < 0$ and $G(d;3) > 0$, the polynomial $G(d;z)$ has at least one real zero in the interval $(2,3)$ and, as we observed in the proof of Theorem~\ref{thm:Gonchar}, this is its only zero on $[1, \infty)$ and must be simple. For odd $d$, each of the terms $[(x-1)^d-x-1]x^{d-1}$ and $(x-1)^d$ is negative for $x < 1$ and hence so is $G(d; x)$; thus $G(d;x)$ has no zeros outside $(2,3)$. For even $d$, the self-reciprocity of $G(d;x)$ implies that to each zero $\xi(d)$ of $G(d;x)$ in $[1,\infty)$ there is a zero $\xi^*(d) = 1 / \xi(d)$ in $(0,1]$. Consequently, by the first part of the proof, $G(d; x)$ has one and only one zero in $(0,1)$ and this zero is simple and lies in the interval $(1/3, 1/2)$. It remains only to consider the interval $(-\infty, 0]$. Clearly $G(d;-1) = 0$, and by computing $G^\prime(d;x)$ and analyzing its sign (in particular, $G^\prime(d;-1) > 0$), one can show that $G(d;x)$ is strictly increasing on $(-\infty,-1)$. Thus $G(d;x) < 0$ on $(-\infty, -1)$ and, by self-reprocity, $G(d;x) > 0$ on $(-1,0)$.
\end{proof}

Regarding the behavior as $d$ increases of the zeros in $(2,3)$, M. Lamprecht \cite{La2011} proved the following.
\begin{prop}
The zeros $\xi(d)$ of $G(d; z)$ in the interval $(2,3]$ form a
strictly monotonically decreasing sequence with limit point $2$
(compare with \eqref{AsymptoticEquation}).
\end{prop}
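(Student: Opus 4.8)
The plan is to treat monotonicity and convergence separately, using throughout that $\xi(d)$ is the \emph{unique} zero of $G(d;z)$ in $[1,\infty)$ (established in the proof of Theorem~\ref{thm:Gonchar} and Proposition~\ref{prop:real.zeros}). Since $G(d;1)=-2<0$, the leading coefficient is positive, and there is only this one zero on $[1,\infty)$, the polynomial satisfies $G(d;z)<0$ on $[1,\xi(d))$ and $G(d;z)>0$ on $(\xi(d),\infty)$. Consequently, to prove the strict inequality $\xi(d+1)<\xi(d)$ it suffices to verify that $G(d+1;\xi(d))>0$, and to prove the limit it suffices to sandwich $\xi(d)$ between $2$ and a quantity tending to $2$.

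First I would rewrite the defining relation $G(d;\xi)=0$, with $\xi=\xi(d)\in(2,3]$, as $(\xi-1)^d(\xi^{d-1}+1)=(\xi+1)\xi^{d-1}$, solve it for $(\xi-1)^d=(\xi+1)\xi^{d-1}/(\xi^{d-1}+1)$, and substitute into
\[
G(d+1;\xi)=(\xi-1)^{d+1}(\xi^{d}+1)-(\xi+1)\xi^{d}.
\]
A short computation then collapses this to
\[
G(d+1;\xi)=\frac{(\xi+1)\,\xi^{d-1}}{\xi^{d-1}+1}\bigl(\xi^{d}(\xi-2)-1\bigr),
\]
so, the prefactor being positive for $\xi>0$, the entire monotonicity question reduces to the single inequality $\xi(d)^{d}(\xi(d)-2)>1$.

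To establish this inequality I would derive an explicit lower bound for $\xi(d)$. From $(\xi-1)^d=(\xi+1)/(1+\xi^{1-d})$ together with $\xi>2$ (so $\xi^{1-d}<2^{1-d}$) and $\xi+1>3$ one gets $(\xi-1)^d>3\cdot 2^{d-1}/(2^{d-1}+1)\ge 3/2$; hence $\xi-1>(3/2)^{1/d}$ and, via $e^{t}-1>t$, $\xi-2>(3/2)^{1/d}-1>\ln(3/2)/d$. Combined with $\xi^{d}>2^{d}$ this gives $\xi^{d}(\xi-2)>2^{d}\ln(3/2)/d$, a quantity which is increasing in $d$ and exceeds $1$ once $d\ge 3$. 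The two remaining cases are checked directly: for $d=1$, $\xi=3$ gives $\xi(\xi-2)=3$, and for $d=2$, $\xi^2=3\xi-1$ gives $\xi^2(\xi-2)=2\xi-1=\sqrt5+2>1$. This yields $\xi(d+1)<\xi(d)$ for every $d\ge1$.

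For convergence I would note that $1+\xi^{1-d}>1$ forces $(\xi-1)^d<\xi+1\le 4$, so that $2<\xi(d)<1+4^{1/d}$; since $4^{1/d}\to1$, the squeeze theorem gives $\xi(d)\to2$, in agreement with \eqref{AsymptoticEquation}. The sequence is therefore strictly decreasing with infimum and limit $2$. I expect the main obstacle to be making the inequality $\xi(d)^{d}(\xi(d)-2)>1$ uniform in $d$: the crude lower bound $2^{d}\ln(3/2)/d$ is smaller than $1$ for $d=1,2$, so one must either sharpen the estimate or, as above, dispose of the small cases by hand before invoking the monotone growth of the bound for $d\ge3$.
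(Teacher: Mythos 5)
Your proof is correct, but there is no line-by-line comparison to make: the paper states this proposition \emph{without proof}, attributing it to M.~Lamprecht \cite{La2011} as a personal communication, so your proposal supplies an argument the paper omits. Your key identity checks out: writing $G(d;z)=(z-1)^d(z^{d-1}+1)-(z+1)z^{d-1}$ and eliminating $(\xi-1)^d=(\xi+1)\xi^{d-1}/(\xi^{d-1}+1)$ at a zero $\xi=\xi(d)$ indeed collapses $G(d+1;\xi)$ to
\[
G(d+1;\xi)=\frac{(\xi+1)\,\xi^{d-1}}{\xi^{d-1}+1}\bigl(\xi^{d}(\xi-2)-1\bigr),
\]
and since $G(d+1;\cdot)$ is monic with $G(d+1;1)=-2<0$ and, by Theorem~\ref{thm:Gonchar} and Proposition~\ref{prop:real.zeros}, has $\xi(d+1)$ as its unique sign change on $[1,\infty)$, the inequality $\xi(d)^{d}(\xi(d)-2)>1$ does yield $\xi(d+1)<\xi(d)$. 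Your estimates hold up: from $(\xi-1)^d=(\xi+1)/(1+\xi^{1-d})>3/2$ you get $\xi-2>\ln(3/2)/d$, hence $\xi^{d}(\xi-2)>2^{d}\ln(3/2)/d$, which is nondecreasing in $d$ and exceeds $1$ for $d\geq 3$ (one pedantic note: at $d=1$ the inequality $\xi^{1-d}<2^{1-d}$ degenerates to an equality, though the bound $(\xi-1)^d>3/2$ survives since $\xi+1>3$ strictly; in any case you dispose of $d=1,2$ by direct computation, and both checks are arithmetically correct). The squeeze $2<\xi(d)<1+4^{1/d}$ then settles the limit. It is worth noting that the limit statement alone is implicit in the paper: the discussion after Theorem~\ref{thm:Gonchar} identifies $2$ as the only possible accumulation point, and the appendix proof of \eqref{AsymptoticEquation} presupposes exactly that convergence; your squeeze is independent of this and even comes with an explicit rate, $0<\xi(d)-2<4^{1/d}-1=\mathcal{O}(1/d)$, consistent with \eqref{AsymptoticEquation} at $q=1$. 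The strict monotonicity, which is precisely the part the paper leaves entirely to \cite{La2011}, is fully established by your reduction, and the resulting argument is elementary and self-contained.
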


We now turn to the study of the complex zeros of $G(d;z)$. Observe that by Proposition~\ref{prop:real.zeros}, $G(d; z)$ has either $d - 1$ (if $d$ is odd) or $d - 2$ (if $d$ is even) pairs of complex conjugated zeros (counting multiplicity). In Figure~\ref{fig6} we have plotted the zeros of $G(d;z)$ for $d = 9, 10, 11,$ and $12$ along with the two unit circles $\mathcal{C}_0$, $\mathcal{C}_1$ centered respectively at $0$ and $1$. Notice that these circles intersect at the points $(1 \pm i \sqrt{3} ) / 2$ and, by Proposition~\ref{prop:cyclot.divisor}, these points are zeros of $G(d; z )$ if and only if $d$ is a multiple of $6$. The zeros of $G(d;z)$ seem to occur roughly into three categories: zeros close to $\mathcal{C}_0$ (indicated by $\times$), zeros close to $\mathcal{C}_1$ (indicated by $+$) and zeros close to the vertical line $x = 1 / 2$ (indicated by $\blacklozenge$). The numbers $N_1$, $N_2$, $N_3$ of zeros in each of the categories are listed in Table~1 for $d = 1, \dots, 12, 42$, from which it appears that these numbers are nearly the same. We will discuss this further in Section~\ref{sec:asymptotics} (see Conjecture~\ref{conj:unit.circle}).

Figure~\ref{fig6} suggests that for even $d$ the polynomial $G(d;z)$ may have zeros lying precisely on $\mathcal{C}_0$. This, in fact, is the case as we now prove.
\begin{figure}[h!t]
\begin{center}
\hfill
\begin{minipage}{0.7\linewidth}
\includegraphics[scale=.625]{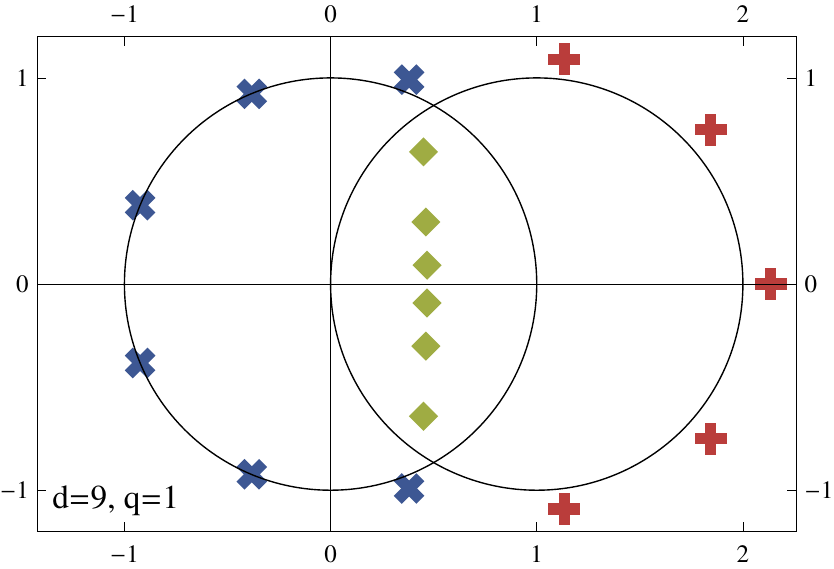} \hfill
\includegraphics[scale=.625]{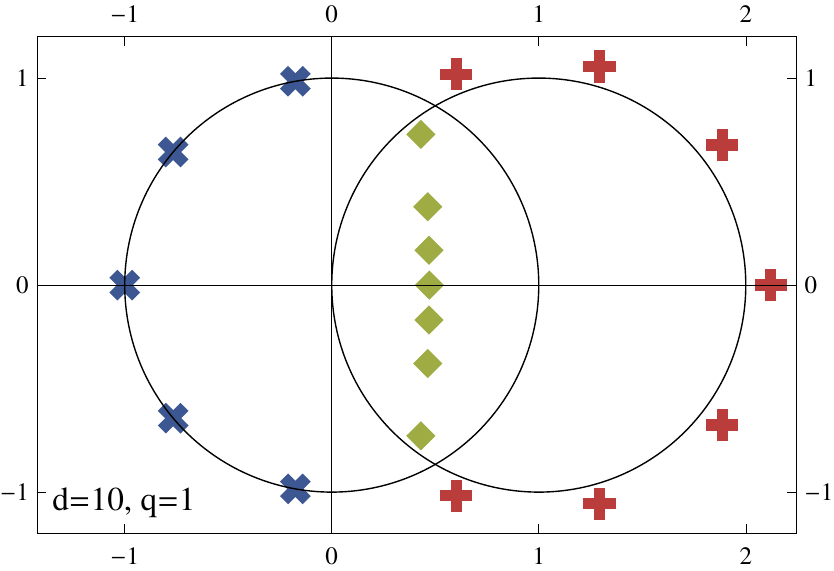}
\includegraphics[scale=.625]{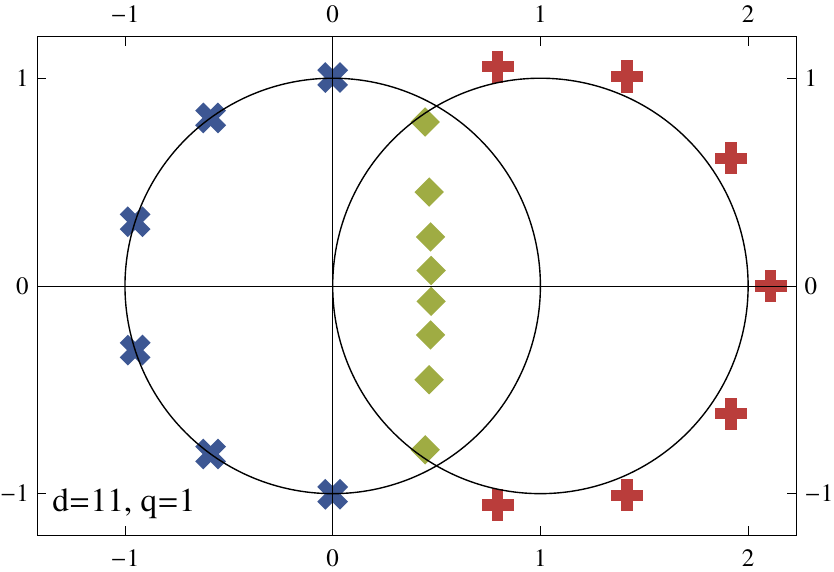} \hfill
\includegraphics[scale=.625]{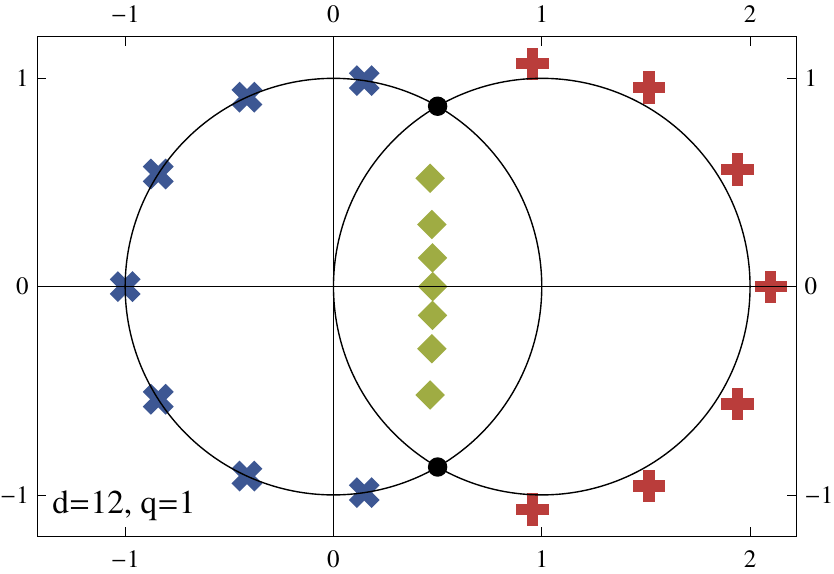}
\end{minipage}
\hfill
\begin{minipage}{0.25\linewidth}
\small
\begin{tabular}{rr|rrrr}
 $d$ & $n$ & $N_1$ & $N_2$ & $N_3$ \\
     &          & {\tiny($\times$)} & {\tiny($\blacklozenge$)} & {\tiny($+$)} \\
\hline
 1 & 1 & 0 & 0 & 1 \\
 2 & 3 & 1 & 1 & 1 \\
 3 & 5 & 2 & 2 & 1 \\
 4 & 7 & 1 & 3 & 3 \\
 5 & 9 & 2 & 4 & 3 \\
\rowcolor[gray]{.8}
 6 & 11 & 3 & 3 & 3 \\
 7 & 13 & 4 & 4 & 5 \\
 8 & 15 & 5 & 5 & 5 \\
 9 & 17 & 6 & 6 & 5 \\
 10 & 19 & 5 & 7 & 7 \\
 11 & 21 & 6 & 8 & 7 \\
\rowcolor[gray]{.8}
 12 & 23 & 7 & 7 & 7 \\
 $\vdots$ & $\vdots$ & $\vdots$ & $\vdots$ & $\vdots$ \\
\rowcolor[gray]{.8}
 42 & 83 & 27 & 27 & 27
\end{tabular}
\end{minipage}
\end{center}
\captionlistentry[table]{. . . }
\captionsetup{labelformat=andtable}
\caption{\label{fig6} Counting the zeros of $G(d; z)$, $n = 2d - 1$. Two additional zeros appear when $6$ divides $d$, see shaded entries in table.}
\end{figure}

For this purpose it is convenient to define $\delta_{6 \mid d} \DEF 1$ if $6$ divides $d$ and $\delta_{6 \mid d} \DEF 0$ otherwise. 

\begin{prop} \label{prop:Zeros}
If $d$ is even, $G(d;z)$ has exactly $4(\lfloor (d-1)/6\rfloor+\delta_{6 \mid d})+1$ zeros on the unit circle $\mathcal{C}_0$; all are simple and satisfy $\re z \leq 1/2$. Their positions are determined by the solutions of the equation $g(\theta) = f(\theta)$ given in \eqref{ThetaEquation} below.

If $d$ is odd, there are no zeros of $G(d; z)$ on the unit circle $\mathcal{C}_0$.
\end{prop}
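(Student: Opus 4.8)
The plan is to restrict to $z=e^{\ii\theta}\in\mathcal{C}_0$ and convert $G(d;z)=0$ into one real trigonometric equation whose solutions I then count. Starting from the factorization $G(d;z)=(z-1)^d(z^{d-1}+1)-(z+1)z^{d-1}$ and the half-angle identities $e^{\ii\theta}-1=2\ii\sin(\theta/2)e^{\ii\theta/2}$, $e^{\ii\theta}+1=2\cos(\theta/2)e^{\ii\theta/2}$, and $e^{\ii(d-1)\theta}+1=2\cos((d-1)\theta/2)e^{\ii(d-1)\theta/2}$, I would factor out the common phase $e^{\ii(2d-1)\theta/2}$ to get
\begin{equation*}
\tfrac12\,e^{-\ii(2d-1)\theta/2}\,G(d;e^{\ii\theta})=\ii^{d}\,2^{d}\sin^{d}(\theta/2)\cos\!\big((d-1)\theta/2\big)-\cos(\theta/2).
\end{equation*}
For odd $d$ the factor $\ii^{d}$ is purely imaginary, so the right-hand side vanishes only if $\cos(\theta/2)=0$ (real part) \emph{and} $\cos((d-1)\theta/2)=0$ (imaginary part); but $\cos(\theta/2)=0$ forces $\theta=\pi$, where $\cos((d-1)\pi/2)=\pm1\neq0$ because $d-1$ is even. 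This disposes of the odd case: no zeros on $\mathcal{C}_0$.

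For even $d$ one has $\ii^{d}=(-1)^{d/2}$ and the bracket is real. Setting $f(\theta)=\cos(\theta/2)$ and $g(\theta)=(-1)^{d/2}2^{d}\sin^{d}(\theta/2)\cos((d-1)\theta/2)$, the $\mathcal{C}_0$-zeros correspond exactly to the solutions of
\begin{equation}\label{ThetaEquation}
g(\theta)=f(\theta).
\end{equation}
Since $f,g$ are even in $\theta$, solutions occur in conjugate pairs $\{z,\bar z\}$, so I would count solutions on $(0,\pi)$, double the count, and add the always-present solution $\theta=\pi$ (that is $z=-1$, consistent with $(z+1)\mid G(d;z)$ for even $d$). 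On $(0,\pi)$ we have $f>0$, and dividing by $2^{d}\sin^{d}(\theta/2)$ recasts \eqref{ThetaEquation} as $(-1)^{d/2}\cos((d-1)\theta/2)=R(\theta)$, where $R(\theta):=\cos(\theta/2)/(2\sin(\theta/2))^{d}$ decreases strictly from $+\infty$ to $0$ while the left-hand side oscillates in $[-1,1]$. In particular $R>1$, hence no solution, for $\theta<\theta_0$, where $R(\theta_0)=1$.

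The hard part will be the refinement $\re z\le 1/2$, i.e.\ the absence of solutions on the thin window $(\theta_0,\pi/3)$: one computes $R(\pi/3)=\sqrt3/2<1$, so $\theta_0<\pi/3$ and the magnitude bound alone is insufficient there. My plan is a local analysis: the defining relation $2^{d}\sin^{d}(\theta_0/2)=\cos(\theta_0/2)$ forces $\pi/3-\theta_0=\mathcal{O}(1/d)$, so across this window the fast argument $(d-1)\theta/2$ moves by $o(1)$ and $(-1)^{d/2}\cos((d-1)\theta/2)$ stays near its value $c:=(-1)^{d/2}\cos((d-1)\pi/6)$ at $\pi/3$. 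As $d-1$ is odd one has $\cos((d-1)\pi/6)\in\{0,\pm\sqrt3/2\}$, and a short check of $d\bmod 12$ gives $c=\sqrt3/2$ precisely when $6\mid d$ and $c\le 0<R$ otherwise; in the former case first-order expansions of the two sides locate their unique coincidence exactly at $\theta=\pi/3$ (the point $z=e^{\ii\pi/3}$, a zero of $z^2-z+1$, matching Proposition~\ref{prop:cyclot.divisor}). This yields $\re z\le 1/2$ for all $\mathcal{C}_0$-zeros, with equality only when $6\mid d$.

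Finally, on $[\pi/3,\pi)$ I would count crossings by partitioning at the extrema $\theta_j=2\pi j/(d-1)$ of $\cos((d-1)\theta/2)$, where $(-1)^{d/2}\cos((d-1)\theta_j/2)=\pm1$ and thus $g-f$ takes the definite sign $\pm1-R$, alternating with $j$. Between consecutive extrema the steep monotone cosine (slope of order $d$, dominating $R'$ away from $\pi/3$) meets the level $R$ exactly once by the intermediate value theorem, while near $\pi$ one has $(-1)^{d/2}\cos((d-1)\theta/2)\to0^-$, so the terminal partial segment contributes the boundary solution $z=-1$ but no interior crossing. Counting the extrema with $(d-1)/6<j<(d-1)/2$ and adding the two end-segment contributions (governed by the sign of $c$ at $\pi/3$ and by the sign $0^-$ at $\pi$) should collapse to exactly $2\big(\lfloor(d-1)/6\rfloor+\delta_{6\mid d}\big)$ solutions of \eqref{ThetaEquation} on $(0,\pi)$. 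Each is transversal, since the cosine's slope of order $d$ beats the (exponentially small, away from $\pi/3$) $R'$, and the exceptional point $z=e^{\ii\pi/3}$ is a simple root coming from the single factor $z^2-z+1$; hence all zeros are simple. Doubling by conjugation and appending $z=-1$ gives $4\big(\lfloor(d-1)/6\rfloor+\delta_{6\mid d}\big)+1$ simple zeros on $\mathcal{C}_0$, all with $\re z\le 1/2$. I expect the decisive difficulty to be the exact end-segment bookkeeping together with the thin-window step, both hinging on the arithmetic of $d\bmod 12$; the remainder is routine monotonicity plus the intermediate value theorem.
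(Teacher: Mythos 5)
Your overall route is the paper's own: after extracting the common phase $e^{\ii(2d-1)\theta/2}$ you reach exactly the equation \eqref{ThetaEquation} (the paper divides through by $(2\ii\sin(\theta/2))^d$, so your $R(\theta)=\cos(\theta/2)/(2\sin(\theta/2))^d$ is the paper's $f$), you settle odd $d$ by separating real and imaginary parts (the paper merely asserts the nonexistence of real solutions, so your argument is a welcome justification), and your count of one crossing per interval between consecutive extrema of the cosine is equivalent bookkeeping to the paper's two crossings per half-period on which $g\ge 0$; your end-segment analysis at $\theta=\pi$ and at $\theta=\pi/3$ and the final tally check out against the known factorizations. However, two of your justifications have genuine gaps, both concentrated near $\theta=\pi/3$, where $R$ is \emph{not} yet exponentially small. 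First, the thin window: from $(2\sin(\theta_0/2))^d=\cos(\theta_0/2)$ one gets $\pi/3-\theta_0=\frac{2\log(2/\sqrt{3})}{\sqrt{3}\,d}\,(1+o(1))$, so the fast phase $(d-1)\theta/2$ moves across the window by a \emph{constant} $\approx\frac{1}{\sqrt{3}}\log\frac{2}{\sqrt{3}}\approx 0.083$, not by $o(1)$ as you claim. Your conclusion survives when $c\le 0$ only because this constant happens to be numerically small, which requires the quantitative bound rather than the false asymptotic; and in the case $6\mid d$ a first-order expansion is not uniformly valid across the window, since the second-order terms of $R$ are of size $d^2\,(\pi/3-\theta_0)^2=O(1)$, the same order as the first-order ones. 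The paper avoids all of this via convexity of $f=R$: the tangent-line bound $f(\theta)\ge f(\pi/3)+f'(\pi/3)(\theta-\pi/3)$ yields $f>1$ on $(0,\pi/3-\alpha)$ with $\alpha=(4-2\sqrt{3})/(3d+1)$, plus a direct estimate $|g|<f$ on the residual sliver; the same convexity (the difference $R-\mathrm{LHS}$ is convex on the window because the cosine factor stays positive there, and its derivative at $\pi/3$ is $\approx -d$) is what rigorously pins the $6\mid d$ coincidence to exactly $\theta=\pi/3$.

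Second, uniqueness and transversality of the crossings. Your dominance claim --- cosine slope of order $d$ beats $R'$ --- fails near $\pi/3$: one computes $R'(\pi/3)\approx -\tfrac{3}{4}d$, of the same order as (indeed larger than) the guaranteed cosine slope $\ge\tfrac{d-1}{4}$ at crossing heights $\le\sqrt{3}/2$, so ``exponentially small away from $\pi/3$'' does not cover the segments adjacent to $\pi/3$. Moreover, on a decreasing branch of the cosine the difference $\mathrm{LHS}-R$ is not monotone (its derivative equals $-R'>0$ at both endpoint extrema, where the cosine slope vanishes), so ``exactly one crossing by the intermediate value theorem'' is unjustified as stated: IVT only gives an odd number per segment, and your exact count hinges on uniqueness. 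The paper's fix is clean and would repair your plan verbatim: on each half-period where $g\ge 0$ the function $f$ is convex and $g$ is concave, so $f-g$ is strictly convex; Rolle's theorem then allows at most two zeros there, and IVT (sign pattern $+,-,+$ at the endpoints and midpoint) forces exactly two, necessarily simple --- which also disposes of your separate transversality worry without any slope comparison. With these two repairs (tangent-line/convexity in the window, convex-versus-concave on the half-periods) your proposal coincides with the published proof.
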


\begin{proof}
Since $z=1$ is not a zero of $G(d;z)$, the equation $G(d;z)=0$ is equivalent to
\begin{equation*}
z^{d-1}+1=\frac{z+1}{z-1}\frac{z^{d-1}}{(z-1)^{d-1}}.
\end{equation*}
Substituting $z=e^{i\theta}$ and changing to trigonometric functions we arrive at
\begin{equation} \label{ThetaEquation}
g(\theta) \DEF (-1)^{d/2} \cos\left( \frac{d-1}{2} \theta \right) = (-1)^{d/2} \frac{\cos(\theta/2)}{\left[2 i \sin(\theta/2) \right]^d} \FED f(\theta).
\end{equation}
Suppose $d$ is even. Then $f(\theta) = \cos( \theta / 2 ) / [ 2 \sin( \theta / 2 ) ]^d$ is real-valued. Since the complex zeros of $G(d;z)$ occur in complex conjugate pairs, we may assume $0\leq \theta \leq \pi$. Using elementary calculus one shows that the function $f(\theta)$ is monotone decreasing and convex on $(0,\pi)$. Hence, $f(\theta)\geq f(\pi/3)+f'(\pi/3)(\theta-\pi/3)$ on $(0,\pi/3)$. This implies that $f(\theta)>1$ for $\theta\in (0, \pi/3-\alpha)$, where $\alpha=(4-2\sqrt{3})/(3d+1)$. On the other hand, on $[ \pi/3-\alpha,\pi/3)$  one can show that $|g(\theta)|<f(\theta)$. Hence, in \eqref{ThetaEquation} we have only to consider the range $[\pi/3,\pi]$ whereupon $f(\theta) \leq f(\pi/3) = \sqrt{3}/2 < 1$. 

The function $g$ is a cosine function with period $4\pi/(d-1)$, where the sign factor ensures that $g$ has a positive derivative at $\theta = \pi$ (which is also an intersection point of $f$ and $g$). The other intersection points of $f$ and $g$ occur in the half-periods where $g \geq 0$. On such a half-period $I$ the convex function $f$ and the concave function $g$ (when restricted to $I$) can intersect in at most two points counting multiplicity, as can be seen by applying Rolle's theorem to the strictly convex function $f-g$. 
That there are at least two such points on $I$ can be seen from the intermediate value theorem applied to the same function ($f-g>0$ at the endpoints of $I$ and $f-g<0$ at the midpoint of $I$) as illustrated in Figure~\ref{fig:typical.cases} for the three canonical cases.
\begin{figure}[h!t]
\begin{center}
\includegraphics[scale=.55]{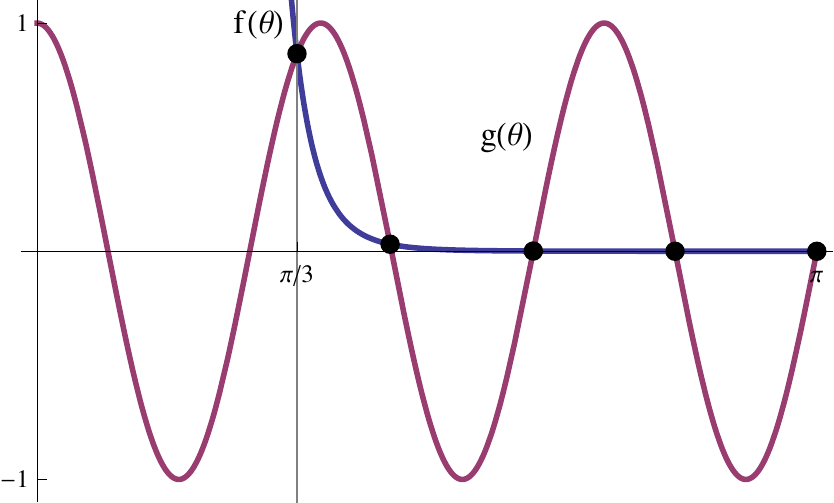}
\includegraphics[scale=.55]{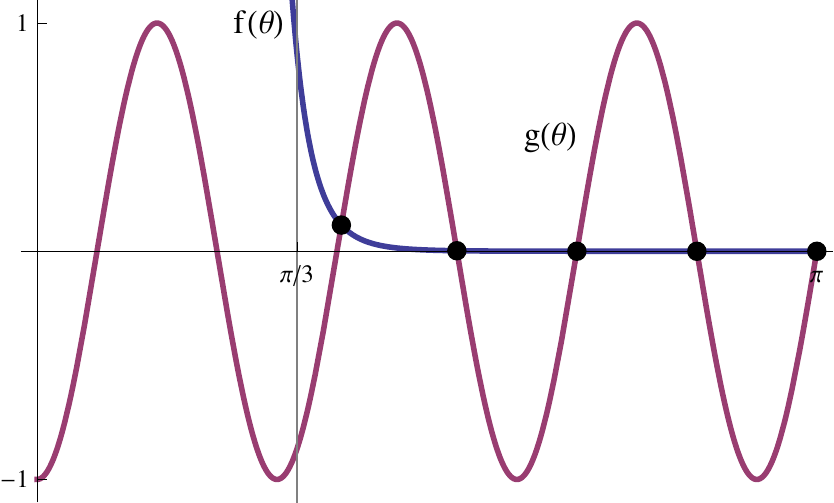}
\includegraphics[scale=.55]{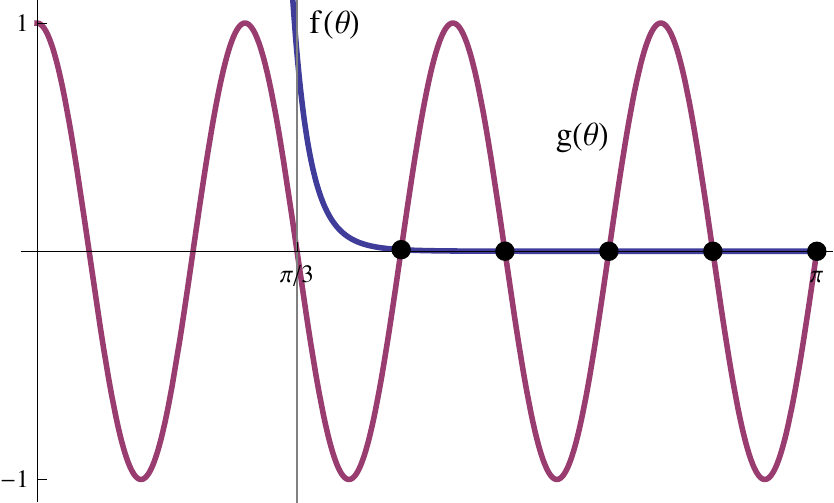}
\end{center}
\caption{\label{fig:typical.cases} Graphs of $f(\theta)$ and $g(\theta)$; typical cases when solving equation~\eqref{ThetaEquation}.} 
\end{figure}

Since there are $\lfloor (d-1)/6\rfloor$ full periods of $g$ in $[\pi/3,\pi]$ plus one more period (partially contained in $[\pi/3,\pi]$) whenever $6$ divides $d$, we have $2(\lfloor (d-1)/6\rfloor+\delta_{6 \mid d})$ zeros in the upper half-plane and that many conjugate zeros of $G(d;z)$ in the lower half-plane.

For odd $d$, equation~\eqref{ThetaEquation} has no real solution in $(0,\pi]$. Since $1$ is not a zero of $G(d; z)$, there are no zeros of $G(d; z)$ on $\mathcal{C}_0$.
\end{proof}

\begin{thm}
All zeros of $G(d; z)$ are simple for each $d \geq 1$.
\end{thm}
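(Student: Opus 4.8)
The plan is to prove the equivalent statement that $G(d;z)$ and its $z$-derivative $G'(d;z)$ share no zero; I would treat $d\ge 2$, since $G(1;z)=z-3$ is trivially simple. Writing $G(d;z)=(z-1)^d(z^{d-1}+1)-(z+1)z^{d-1}$, I first dispose of the exceptional points. We have $G(d;0)=(-1)^d\ne0$ and $G(d;1)=-2\ne0$, so $0,1$ are not zeros; and if $\zeta^{d-1}=-1$ then $G(d;\zeta)=-(\zeta+1)\zeta^{d-1}$, which vanishes only at $\zeta=-1$, forcing $d$ even. The point $\zeta=-1$ (a zero of $G$ only for even $d$) lies on $\mathcal C_0$ and on the real line, so its simplicity is already guaranteed by Proposition~\ref{prop:real.zeros} and Proposition~\ref{prop:Zeros} (equivalently, by $G'(d;-1)\ne0$, cf.\ the proof of Proposition~\ref{prop:real.zeros}).

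Next I handle the generic case $\zeta\notin\{0,\pm1\}$ with $\zeta^{d-1}\ne-1$. At a zero $\zeta$ of $G$ the four factors satisfy $(\zeta-1)^d(\zeta^{d-1}+1)=(\zeta+1)\zeta^{d-1}\FED K\ne0$, so expanding $G'$ by the product rule and factoring out $K$ shows that $G'(d;\zeta)=0$ is equivalent to the logarithmic-derivative identity $\tfrac{d}{\zeta-1}+\tfrac{(d-1)\zeta^{d-2}}{\zeta^{d-1}+1}=\tfrac{1}{\zeta+1}+\tfrac{d-1}{\zeta}$. Clearing the (simple) denominators $\zeta-1,\ \zeta+1,\ \zeta,\ \zeta^{d-1}+1$ collapses this to the polynomial condition
\begin{equation*}
\Phi(d;z)\DEF(d-1)z^{d+1}+(d+1)z^{d}+(d+1)z+(d-1)=0 .
\end{equation*}
Thus every multiple zero of $G(d;z)$ is a common zero of $G$ and of the self-reciprocal polynomial $\Phi$, whose degree is only $d+1$.

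The heart of the argument is then the \emph{key lemma}: every zero of $\Phi(d;\cdot)$ lies on the real line or on the unit circle $\mathcal C_0$. Because $\Phi$ is palindromic (after removing the factor $z+1$, which is present precisely when $d$ is even), the Joukowski substitution $u=z+1/z$ carries it to a polynomial $\widetilde\Phi(u)$ of half the degree, and the lemma is equivalent to $\widetilde\Phi$ being real-rooted: a real $u$ with $|u|\le2$ arises from a conjugate pair on $\mathcal C_0$, a real $u$ with $|u|>2$ from a real reciprocal pair, while only a non-real $u$ could arise from a zero off $\mathbb R\cup\mathcal C_0$. Evaluating $\Phi$ on $z=e^{\ii\phi}$ and on $z=\pm e^{t}$ identifies $\widetilde\Phi$, up to a positive factor, as the quasi-orthogonal combination $(d-1)T_n+(d+1)T_{n-1}$ of Chebyshev polynomials of the first kind when $d$ is odd ($n=(d+1)/2$), and as $(d-1)V_m+(d+1)V_{m-1}$ of Chebyshev polynomials of the third kind when $d$ is even ($m=d/2$). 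A linear combination of two consecutive orthogonal polynomials has only real zeros, at most one of them outside the interval of orthogonality; this classical property of quasi-orthogonal polynomials yields the key lemma, and it can alternatively be confirmed here by a direct sign-change count, using positivity of $\cosh$ to exclude $u>2$ and monotonicity to control $u<-2$, together with an oscillation count on $(-2,2)$.

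Finally I assemble the pieces. A hypothetical multiple zero $\zeta$ of $G(d;z)$ would, by the preceding reduction, satisfy $\Phi(d;\zeta)=0$, hence by the key lemma $\zeta$ is real or lies on $\mathcal C_0$. But Proposition~\ref{prop:real.zeros} shows that every real zero of $G$ is simple, and Proposition~\ref{prop:Zeros} shows that every zero of $G$ on $\mathcal C_0$ is simple (and that there are none for odd $d$). This contradiction, together with the exceptional-point analysis of the first paragraph, establishes that all zeros of $G(d;z)$ are simple. I expect the main obstacle to be the key lemma itself, that is, the real-rootedness of the reduced polynomial $\widetilde\Phi$, where the even/odd parity bookkeeping — the removal of the factor $z+1$ for even $d$ and the switch between Chebyshev polynomials of the first and third kinds — must be carried out with care.
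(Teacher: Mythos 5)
Your argument is correct, and at the level of overall strategy it coincides with the paper's: the paper likewise shows that a multiple zero $\zeta$ of $G(d;z)$ would have to be a zero of exactly your polynomial $\Phi(d;z)$, which appears there, divided by $d-1$, as $Q(d;z)=z^{d+1}+\frac{d+1}{d-1}z^{d}+\frac{d+1}{d-1}z+1$ in \eqref{eq:G.EQ.Q}; it is obtained by computing $\zeta(\zeta-1)G'(d;\zeta)=(d-1)(\zeta-1)^{d}(\zeta^{d}+1)+2\zeta^{d}=P(d;\zeta)$ at a zero of $G$ and eliminating $(\zeta-1)^{d}$, which is algebraically the same elimination as your logarithmic-derivative identity (indeed $(\zeta^{d-1}+1)\,P(d;\zeta)=\zeta^{d-1}\Phi(d;\zeta)$ modulo $G(d;\zeta)=0$), and the endgame --- any zero of $\Phi$ is real or on $\mathcal{C}_0$, contradicting the simplicity guaranteed by Propositions \ref{prop:real.zeros} and \ref{prop:Zeros} --- is identical. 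Where you genuinely differ is the proof of the location lemma for $\Phi$. The paper does it by hand: Descartes' rule of signs together with $Q(d;-1)=Q'(d;-1)=Q''(d;-1)=0$ for even $d$ locates the real zeros at or near $-1$, and the factorization of $Q(d;e^{\ii\phi})$ reduces the circle zeros to counting solutions of $\tan(\phi/2)\tan(d\phi/2)=-d$, so that all $d+1$ zeros are explicitly accounted for. You instead exploit self-reciprocity of $\Phi$, pass to $u=z+1/z$, and invoke the classical Fej\'er-type theorem that an order-one quasi-orthogonal combination $p_n+c\,p_{n-1}$ of consecutive orthogonal polynomials is real-rooted; your identifications $(d-1)T_n+(d+1)T_{n-1}$ for odd $d$ and $(d-1)V_m+(d+1)V_{m-1}$ for even $d$ check out (for even $d$ note $-1$ is in fact a \emph{triple} zero of $\Phi$, consistent with $V_k(-1)=(-1)^k(2k+1)$ making your combination vanish at $u=-2$), and, as you observe, real-rootedness of $\widetilde\Phi$ alone suffices, since every real $u$ pulls back to a pair of zeros on $\mathbb{R}\cup\mathcal{C}_0$. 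What each buys: the paper's trigonometric counting is elementary and self-contained and moreover pins down the exact number and nature of the zeros of $Q$; your route outsources that counting to a standard theorem and is shorter once it is granted, at the cost of the parity bookkeeping you flag. One small credit to your version: the explicit exceptional-point analysis ($\zeta\in\{0,\pm1\}$ or $\zeta^{d-1}=-1$, where the common value $K$ could vanish) is a step your factoring-out-$K$ argument genuinely requires, and you handle it correctly, whereas the paper's polynomial-identity formulation never divides and so avoids the issue automatically.
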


\begin{proof}
The cases $d=1$ and $d=2$ are obvious, so assume $d\geq 3$. Let
$\zeta$ be a zero of $G(d; z)$. Then $\zeta$ is simple if
$G^\prime(d; \zeta) \neq 0$. By means of some helpful substitutions for the expressions in braces\footnote{The replacements are $(\zeta-1)^d(\zeta^{d-1}+1) \mapsto \zeta^d + \zeta^{d-1}$ , $[(\zeta-1)^d - \zeta - 1] \zeta^{d-1} \mapsto -(\zeta-1)^{d}$ and $\zeta^d - ( \zeta - 1 )^d \mapsto \zeta^{d-1} [ ( \zeta - 1 )^d - 1]$.}, we find that
\begin{align*}
\zeta \left( \zeta - 1 \right) G^\prime(d; \zeta)
&= d \zeta \left\{ \left( \zeta - 1 \right)^d \left( \zeta^{d-1}
+ 1 \right) \right\} - \zeta^d \left( \zeta - 1 \right) +
\left( d - 1 \right) \left\{ \left[ \left( \zeta - 1 \right)^d -
\zeta - 1 \right] \zeta^{d-1} \right\} \left( \zeta - 1 \right) \notag \\
&= \left( d - 1 \right) \left[ \zeta^{d+1} - \left( \zeta -
1 \right)^{d+1} \right] + \left( d + 1 \right) \zeta^d \\
&= \left( d - 1 \right) \zeta \left\{ \zeta^d - \left( \zeta - 1 \right)^d \right\} + \left( d - 1 \right) \left( \zeta - 1 \right)^d + \left( d + 1 \right) \zeta^d \\
&= \left( d - 1 \right) \left( \zeta - 1 \right)^d \left( \zeta^d + 1 \right) + 2 \zeta^d \FED P(d; \zeta). 
\end{align*}
Suppose to the contrary that $P(d;\zeta) = 0$. Then on replacing $(\zeta-1)^d$ by $-2 \zeta^d /[(d-1)(\zeta^d+1)]$ in the formula \eqref{polynomial.G} for $G(d;\zeta)$ we get
\begin{equation}  \label{eq:G.EQ.Q}
- \left( \zeta^d + 1 \right) G(d; \zeta) = \zeta^{d-1} \left( \zeta^{d+1} + \frac{d+1}{d-1} \zeta^d + \frac{d+1}{d-1} \zeta +1 \right).
\end{equation}
The polynomial $Q(d; z)$ of degree $d+1$ obtained by replacing $\zeta$ by $z$ in the second parenthetical expression in \eqref{eq:G.EQ.Q} has three real zeros (at $-1$) and $d-2$ complex zeros for even $d$ and two negative zeros ('near' $-1$) and $d-1$ complex zeros for odd $d$.\footnote{Interestingly, if $\frac{d+1}{d-1}$ is changed to $\frac{d-3}{d-1}$ in $Q(d; z)$, then the new polynomial has all its zeros on the unit circle.} This can be seen from the facts that
\begin{equation*}
Q(d; -1) = - \frac{2\left[ 1 - (-1)^d \right]}{d-1}, \qquad Q^\prime(d; -1) = \left[ 1 - (-1)^d \right] \frac{d+1}{d-1}, \qquad Q^{\prime\prime}(d; -1) = 0
\end{equation*}
and Descartes' Rule of Signs. Substituting $w = e^{i \phi}$ and using trigonometric functions, we arrive at
\begin{equation*}
Q(d; e^{i \phi}) = \frac{4 e^{i [(d+1)/2] \phi}}{d-1} \cos \frac{\phi}{2} \cos \frac{d \phi}{2} \left( d + \tan \frac{\phi}{2} \tan \frac{d \phi}{2} \right).
\end{equation*}
By symmetry, the number of solutions in the open set $(0,2\pi) \setminus \{ \pi \}$ of the equation
\begin{equation} \label{TheotherEquation}
\tan \frac{\phi}{2} \tan \frac{d \phi}{2} = - d
\end{equation}
equals the number of zeros of $\tan(d \phi/2)$ in $(0, 2 \pi)$ (cf. Figure~\ref{fig:simplicity}), which is $d-2$ for even $d$ and $d-1$ for odd $d$.
\begin{figure}[h!t]
\begin{center} 
\includegraphics[scale=.75]{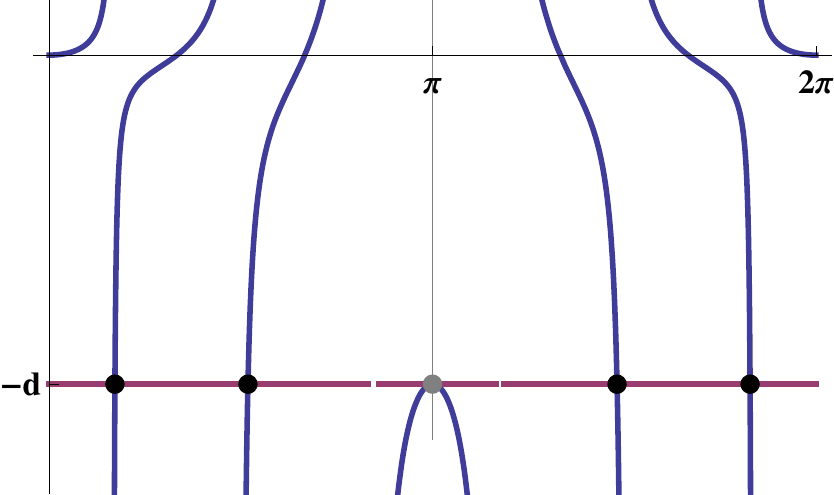} \hspace{10mm}
\includegraphics[scale=.75]{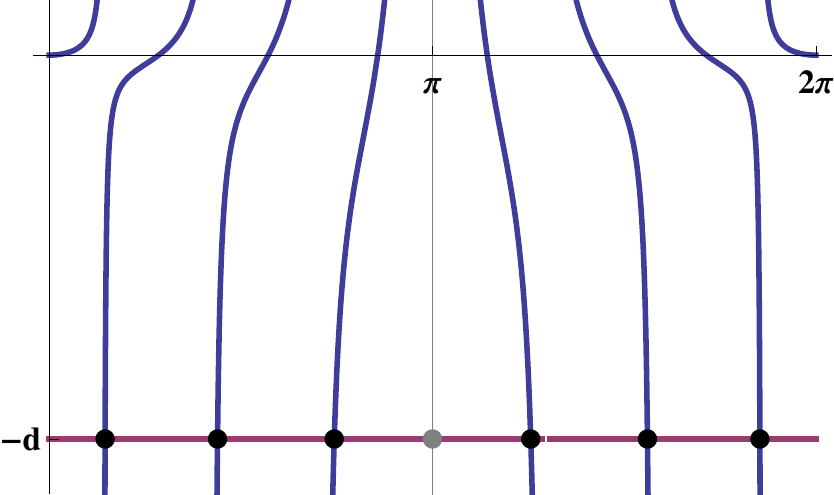} 
\end{center}
\caption{\label{fig:simplicity} Typical cases when solving equation~\eqref{TheotherEquation}.}
\end{figure}
Thus, all zeros of $Q(d; z)$ are accounted for. They are either negative or complex conjugate pairs of zeros located on the unit circle. 

For odd $d$, no zero $\zeta$ of $G(d; z)$ is on the unit circle (Proposition~\ref{prop:Zeros}) or negative (Proposition~\ref{prop:real.zeros}), so $Q(d; \zeta) \neq 0$, which contradicts \eqref{eq:G.EQ.Q}. Suppose $d$ is even. If $\zeta$ is on the unit circle $\mathcal{C}_0$, then it is simple (Proposition~\ref{prop:Zeros}). If $\zeta$ is not on $\mathcal{C}_0$, then $Q(d; \zeta) \neq 0$, which again contradicts \eqref{eq:G.EQ.Q}. 
\end{proof}

\subsection{Asymptotics of Gonchar polynomials}
\label{sec:asymptotics}

Numerically computing the zeros for $G(d;z)$ for small values of $d$, we observe (cf. Figures~\ref{fig6} and \ref{fig:limitdistr} and Table~1) that they essentially form three groups separated by the sets
\begin{subequations} 
\begin{align}
A_1 &\DEF \left\{ z \in \mathbb{C} : \re z < 1/2, | z - 1 | > 1 \right\}, \\
A_2 &\DEF \left\{ z \in \mathbb{C} : | z | < 1, | z - 1 | < 1 \right\}, \\
A_3 &\DEF \left\{ z \in \mathbb{C} : \re z > 1/2, | z | > 1
\right\}.
\end{align}
\end{subequations}
For the purpose of asymptotic analysis (large $d$) we rewrite the
equation $G(d; z) = 0$ in three different ways to emphasize an
exponentially decaying right-hand side when considering zeros 
of $G(d;z)$ from the indicated part of the complex plane:
\begin{subequations} \label{eq:asympt.equations}
\begin{align}
z^{d-1} + 1 &= \frac{z+1}{z-1} \left( \frac{z}{z-1} \right)^{d-1},
& ( \re z &< 1 / 2 ) \label{eq:approx1} \\
\left( z - 1 \right)^d - \left( z + 1 \right) z^{d-1} &= -
\left( z - 1 \right) \left[ \left( z - 1 \right) z \right]^{d-1},
& ( \left| z - 1 \right| \left| z \right| &< 1 ) \label{eq:approx2} \\
\left( z - 1 \right)^d - z - 1 &= - \left( z - 1 \right) \left( \frac{z - 1}{z} \right)^{d-1}. & ( \re z &>
1 / 2 ) \label{eq:approx3}
\end{align}
\end{subequations}

The following theorem concerning the limit behavior of the zeros of $G(d;z)$ as $d \to \infty$ is illustrated in Figure~\ref{fig:limitdistr}. 
\begin{thm}
Let $\Gamma$ be the set consisting of the boundary of the union of the two unit disks centered at $0$ and $1$ and the line-segment connecting the intersection points as indicated in Figure~\ref{fig:limitdistr}. Then, as $d \to \infty$, all the zeros of $G(d;z)$ tend to $\Gamma$, and every point on $\Gamma$ attracts zeros of these polynomials.
\end{thm}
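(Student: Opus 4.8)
I would write $G(d;z)=T_1+T_2+T_3$ with $T_1=(z-1)^d z^{d-1}$, $T_2=(z-1)^d$ and $T_3=-(z+1)z^{d-1}$, and attach to each term its exponential rate
\[
\Lambda_1(z)=\log|z-1|+\log|z|,\quad \Lambda_2(z)=\log|z-1|,\quad \Lambda_3(z)=\log|z|,
\]
so that $\tfrac1d\log|T_j(z)|\to\Lambda_j(z)$. The plan is the standard two-step programme for zeros of exponential sums: (A) no zero can survive where one rate strictly dominates the other two, which forces all zeros onto the \emph{equimodular set}; and (B) along that set the surviving two-term balance produces zeros that become dense. The pairwise coincidences are $\Lambda_1=\Lambda_2\Leftrightarrow|z|=1$, $\Lambda_1=\Lambda_3\Leftrightarrow|z-1|=1$, and $\Lambda_2=\Lambda_3\Leftrightarrow\re z=1/2$; intersecting each with the region where the common value is the \emph{maximum} of the three, one checks that the set where $\max_j\Lambda_j$ is attained at least twice is precisely $\Gamma$, namely the left arc $\{|z|=1,\ \re z\le1/2\}$ of $\mathcal{C}_0$, the right arc $\{|z-1|=1,\ \re z\ge1/2\}$ of $\mathcal{C}_1$, and the segment $\{\re z=1/2,\ |z|\le1\}$. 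The three rates thus cut $\C$ into open dominance regions $D_1$ (outside both disks), $D_2$ (inside $\mathcal{C}_0$, left of the line) and $D_3$ (inside $\mathcal{C}_1$, right of the line), meeting along $\Gamma$.

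\textbf{Step (A): localization.} First I would confine the zeros: a direct modulus estimate shows that for $|z|>1+\sqrt2$ one has $|T_1|>|T_2|+|T_3|$ once $d$ is large, so every zero lies in $K_0\DEF\{|z|\le1+\sqrt2\}$. On $K_0\setminus N_\eps(\Gamma)$ (an $\eps$-neighbourhood of $\Gamma$ removed) the dominant rate $\Lambda_j$ exceeds the next largest by a gap $\delta=\delta(\eps)>0$, so the dominant term satisfies $|T_j|\ge e^{d\delta/2}\bigl(|T_{j'}|+|T_{j''}|\bigr)$ and hence $|G|\ge|T_j|\,(1-2e^{-d\delta/2})>0$ there. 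Therefore, for every $\eps>0$, all zeros lie in $N_\eps(\Gamma)$ once $d$ is large, which is the first assertion.

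\textbf{Step (B): density on $\Gamma$.} Here I would use the three rearrangements \eqref{eq:approx1}--\eqref{eq:approx3}, each of which recasts $G(d;z)=0$ on one piece of $\Gamma$ into a clean two-term normal form $u(z)^{d}=v(z)+\eps_d(z)$ (with exponent $d$ or $d-1$), where $u,v$ are analytic and non-vanishing near the piece, $|u|\equiv1$ on it, and $\eps_d\to0$ exponentially uniformly on compact subsets away from the triple points. Concretely, \eqref{eq:approx1} gives $u=z,\ v\equiv-1$ on the left arc; \eqref{eq:approx3} gives $u=z-1,\ v=z+1$ on the right arc; and dividing \eqref{eq:approx2} by $z^{d-1}(z-1)$ gives $u=(z-1)/z,\ v=(z+1)/(z-1)$ on the segment. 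In each case $\arg u$ is strictly monotone along the piece (on the segment $z=\tfrac12+it$ one gets $\tfrac{d}{dt}\arg\tfrac{z-1}{z}=-(\tfrac14+t^2)^{-1}$, total variation $4\pi/3$) and $\log|u|$ has non-vanishing transverse gradient. A two-term lemma then applies: solutions of $u^{d}=v$ obey $|u|^d=|v|=O(1)$, hence lie within $O(1/d)$ of the piece, while the phase condition $d\arg u\equiv\arg v\pmod{2\pi}$ spaces them $O(1/d)$ apart along it; a Rouché argument on disks of radius $\asymp1/d$, where $|u^{d}-v|\gtrsim1$ and $|\eps_d|\to0$, upgrades each to a genuine simple zero of $G(d;z)$ nearby. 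Letting $d\to\infty$, these zeros become dense in the interior of each piece, and the two triple points $(1\pm i\sqrt3)/2$ are captured as limits of zeros on the adjacent pieces (and are exact zeros when $6\mid d$, by Proposition~\ref{prop:cyclot.divisor}). Hence every point of $\Gamma$ attracts zeros.

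\textbf{Main obstacle.} The crux is the two-term lemma of Step (B): making the passage from the exact balance $u^{d}=v$ to the perturbed equation $u^{d}=v+\eps_d$ uniform, and certifying via the argument principle both the $O(1/d)$ spacing and the $O(1/d)$ proximity to $\Gamma$. The delicate zone is near the two triple points, where all three rates coincide, the gap $\delta$ of Step (A) degenerates, and the two-term reduction breaks down because $\eps_d$ ceases to decay uniformly; these points must be reached by the limiting argument above rather than through the normal form. Alternatively, the entire statement can be deduced from general equidistribution theorems for zeros of sums $\sum_j A_j(z)\,w_j(z)^d$.
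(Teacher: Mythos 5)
Your proposal is correct in outline, and its second half takes a genuinely different route from the paper. For the localization step your dominance-rate argument is the same idea as the paper's: the paper computes $\lim_{d\to\infty}|\cdot|^{1/d}$ for the three rearrangements \eqref{eq:approx1}--\eqref{eq:approx3} and concludes that any compact set in $\mathbb{C}\setminus\Gamma$ is eventually zero-free, which is exactly your statement that a uniform gap $\delta(\eps)$ between the largest and second-largest of $\Lambda_1,\Lambda_2,\Lambda_3$ forces $|G|>0$ off $N_\eps(\Gamma)$; your identification of the equimodular set (left arc of $\mathcal{C}_0$ where $\re z\le 1/2$, right arc of $\mathcal{C}_1$ where $\re z\ge 1/2$, segment $\re z=1/2$, $|z|\le 1$) matches the paper's $\Gamma$. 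The difference is in proving that every point of $\Gamma$ attracts zeros. The paper argues softly and by contradiction: if some $w\in\Gamma$ had a zero-free disk $D_w$ for all large $d$, one could take single-valued analytic $d$-th roots of the relevant rational expressions, obtaining a uniformly bounded (hence normal) family whose limit functions would have modulus identically $1$ on one side of $\Gamma\cap D_w$ and non-constant modulus on the other --- impossible for an analytic function, by the identity principle. Your approach is instead constructive: reduce each piece of $\Gamma$ to a two-term normal form $u(z)^d=v(z)+\eps_d(z)$ (your choices of $u$ and $v$ on the three pieces are correct, including the computation $\frac{d}{dt}\arg\frac{z-1}{z}=-(1/4+t^2)^{-1}$ on the segment), then use the phase condition $d\arg u\equiv\arg v \pmod{2\pi}$ plus Rouch\'e on disks of radius $\asymp 1/d$ to manufacture genuine simple zeros spaced $O(1/d)$ along each piece and within $O(1/d)$ of it. This buys strictly more than the paper proves --- quantitative spacing and side-of-curve information consistent with Proposition~\ref{prop:Zeros} (where the same trigonometric balance is worked out exactly on $\mathcal{C}_0$ for even $d$) and with the zero counts $N_1,N_2,N_3$ of Table~1 and the concluding conjectures --- at the cost of the heavier two-term lemma, whose uniformity you would need to establish carefully; your handling of the triple points $(1\pm\ii\sqrt{3})/2$, where the gap degenerates and $\eps_d$ loses uniform decay, is the right fix, since any neighborhood of a triple point contains interior points of an adjacent piece at which your construction already plants zeros for all large $d$. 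Both proofs are complete strategies; the paper's is shorter and softer, yours is longer but sharper.
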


\begin{figure}[h!t]
\centerline{\includegraphics[scale=1]{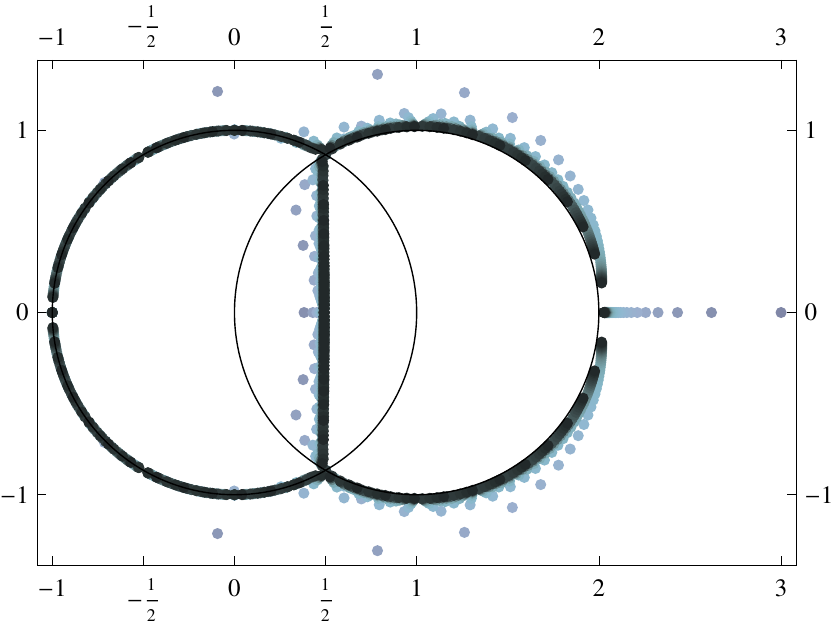}}
\caption{\label{fig:limitdistr} Zeros of $G(d;z)$ for $d = 1, 2, \dots, 40$.}
\end{figure}

\begin{proof}
First, we observe that a closed set $K$ in $\mathbb{C} \setminus \Gamma$ is free of zeros of $G(d;z)$ for sufficiently large $d$ as can be seen from the following relations obtained from \eqref{eq:approx1}:
\begin{align*}
\lim_{d \to \infty} \left| 1 + \frac{1}{z^{d-1}} - \frac{z+1}{z-1} \left( \frac{1}{z-1} \right)^{d-1} \right|^{1/d} &= 
\begin{cases}
1 & \text{if $| z | > 1$, $| z - 1 | > 1$,} \\
1 / \left| z - 1 \right| & \text{if $| z | > 1$, $| z - 1 | < 1$,}
\end{cases} \\
\lim_{d \to \infty} \left| \frac{z+1}{z-1} - \left( \frac{z - 1}{z} \right)^{d-1} \left( z^{d-1} + 1 \right) \right|^{1/d} &=
\begin{cases}
1 & \text{if $| z | < 1$ , $| \frac{z-1}{z} | < 1$ ($\re z > 1/2$),} \\
\left| \frac{z-1}{z} \right| & \text{if $| z | < 1$ , $| \frac{z-1}{z} | > 1$ ($\re z < 1/2$),}
\end{cases} \\
\lim_{d \to \infty} \left| z^{d-1} + 1 - \frac{z+1}{z-1} \left( \frac{z}{z-1} \right)^{d-1} \right|^{1/d} &= 
\begin{cases}
\left| z \right| & \text{if $| z | > 1$, $| \frac{z}{z-1} | < 1$ ($\re z < 1/2$),} \\
1 & \text{if $| z | < 1$, $| \frac{z}{z-1} | < 1$ ($\re z < 1/2$).}
\end{cases} 
\end{align*}

The second part of the assertion, that every point of $\Gamma$ attracts zeros, is proved by contradiction. 
Given a supposedly non-attracting point $w$ on $\Gamma$, there is a sufficiently small\footnote{The disk is small enough that its intersection with $\Gamma$ is contained either in $\mathcal{C}_0$, $\mathcal{C}_1$ or the open line-segment connecting the intersection points of $\mathcal{C}_0$ and $\mathcal{C}_1$.} open disk $D_w$ centered at $w$ containing no zeros of $G(d;z)$ for all sufficiently large $d$. It is possible to then define a single-valued analytic branch of the $d$-th root of any of the rational functions whose moduli appear on the left-hand sides above. Thereby, we obtain sequences of functions which are analytic and uniformly bounded in $D_w$. Such sequences form normal families in $D_w$. According to the right-hand sides above, at least one limit function of these families (which is necessarily analytic in $D_w$) will have the property that its modulus is $1$ in one part and is non-constant in the other part of $D_w$ which is separated by $\Gamma$. This gives the desired contradiction, since an analytic function in a domain that has constant modulus on a subdomain must be constant throughout the whole domain. Consequently, each point of $\Gamma$ attracts zeros of $G(d;z)$ as $d \to \infty$.
\end{proof}

It is inviting to compare the zeros of $G(d;z)$ with the ones of the polynomials given at the left-hand sides of \eqref{eq:approx1}, \eqref{eq:approx2} and \eqref{eq:approx3}. Such comparisons will likely lead to a finer analysis of the properties of the zeros of $G(d;z)$.

We conclude this note with some challenging conjectures.

\begin{conj} \label{conj:unit.circle}
For every positive integer $d$, the zeros of $G(d; z)$ form three
groups separated by the sets $A_1$, $A_2$ and $A_3$ except when $6$
divides $d$ in which case one also has the zeros $( 1 \pm \ii
\sqrt{3} ) / 2$. 
\end{conj}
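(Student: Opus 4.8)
The plan is to first translate Conjecture~\ref{conj:unit.circle} into a statement purely about where $G(d; z)$ may vanish. A short case analysis on the signs of $\re z - 1/2$, $|z| - 1$ and $|z-1| - 1$—using the identity $|z-1|^2 - |z|^2 = 1 - 2\re z$—shows that $A_1 \cup A_2 \cup A_3$ exhausts all of $\mathbb{C}$ except for the measure-zero set $E_0 \cup E_1 \cup E_{1/2}$, where $E_0 = \{\,|z| = 1,\ \re z > 1/2\,\}$ is the inner (right) arc of $\mathcal{C}_0$, $E_1 = \{\,|z-1| = 1,\ \re z < 1/2\,\}$ is the inner (left) arc of $\mathcal{C}_1$, and $E_{1/2} = \{\,\re z = 1/2,\ |z| > 1\,\}$ consists of the two open rays of the critical line, these three pieces meeting only at the corners $c_\pm = (1 \pm \ii\sqrt{3})/2$. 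Hence the conjecture is equivalent to the two assertions that $G(d; z)$ has no zeros on $E_0 \cup E_1 \cup E_{1/2}$ and that $c_\pm$ are zeros exactly when $6 \mid d$; granting these, the remaining zeros must fall into the three disjoint open regions, which is the asserted tripartition.

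Two of these pieces, together with the corners, follow from results already established. The corner statement is precisely Proposition~\ref{prop:cyclot.divisor}, since $c_\pm$ are the roots of $z^2 - z + 1$. For $E_0$ one invokes Proposition~\ref{prop:Zeros}: when $d$ is odd there are no zeros on $\mathcal{C}_0$ whatsoever, and when $d$ is even every zero on $\mathcal{C}_0$ has $\re z \le 1/2$, so none lies on the open arc $E_0$. Moreover, for even $d$ the self-reciprocity of $G(d; z)$ turns the inversion $z \mapsto 1/z$ into a symmetry of the zero set; a direct computation shows this map fixes $\mathcal{C}_0$, sends $\mathcal{C}_1$ to the line $\re z = 1/2$, and thereby interchanges $E_1$ and $E_{1/2}$. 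Consequently, for even $d$ it suffices to treat $E_1$, the claim on $E_{1/2}$ then being automatic.

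It then remains to exclude zeros from $E_1$ and $E_{1/2}$, handling odd $d$ without the symmetry. On $E_1$ one has $|z - 1| = 1$ and $|z| < 1$, so in \eqref{eq:approx1} the factor $(z/(z-1))^{d-1}$ has modulus $|z|^{d-1}$, making the right-hand side exponentially small. A zero on $E_1$ would then satisfy $|z^{d-1} + 1| = |z + 1|\,|z|^{d-1}$; since $|z^{d-1} + 1| \ge 1 - |z|^{d-1}$ and $|z + 1| < \sqrt{3}$ on $E_1$, this forces $|z|^{d-1} > 1/(1 + \sqrt{3})$, which is impossible when $|z|$ is bounded away from $1$—that is, away from the corners—and $d$ is large. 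The line $E_{1/2}$ is more delicate: there $|z| = |z - 1|$, so none of \eqref{eq:approx1}, \eqref{eq:approx2}, \eqref{eq:approx3} has a small right-hand side, the line being exactly the locus on which the competing terms balance. Here I would exploit the identity $z - 1 = -\bar{z}$, valid on $\re z = 1/2$, to reduce $G(d; z) = 0$ to a single real equation of the intersection type $g(\theta) = f(\theta)$ met in the proof of Proposition~\ref{prop:Zeros} (compare \eqref{ThetaEquation}), and then rule out solutions with $|z| > 1$ by the same convexity and Rolle-theorem counting, the only admissible coincidences occurring at $c_\pm$ when $6 \mid d$.

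The hard part, and presumably the reason the full statement is left open, is to make all of this uniform in $d$ and to control the corner regime. Away from $c_\pm$ the modulus estimates above do force a contradiction, but only transparently once $d$ is large; since the conjecture is asserted for every $d$, the bounds must be rendered effective for all $d \ge 1$ and supplemented by direct verification for the finitely many small $d$ left uncovered (the statement is known numerically up to $d = 500$). Near the corners the situation is genuinely borderline: the subdominant term is comparable to the dominant one, and bona fide zeros do emerge there precisely when $6 \mid d$, so no soft Rouché-type argument can separate the cases. What is needed instead is a careful local expansion—uniform in $d$—of the balance between $(z-1)^d$, $z^{d-1}$ and the lower-order terms in a shrinking neighbourhood of each $c_\pm$, together with the analogous analysis along the critical line. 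Controlling this corner-and-transition regime is, I expect, the decisive obstacle.
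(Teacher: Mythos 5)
A point of status first: this statement is one of the paper's concluding \emph{conjectures}. The paper offers no proof of it --- only the numerical evidence of Table~1 and the zero plots --- so there is no ``paper proof'' to compare yours against, and a complete argument would go beyond the paper itself. Your proposal, as you candidly say at the end, is not complete either; what it does contain splits into a correct reduction and a genuinely open remainder. The reduction is sound: using $|z-1|^2 - |z|^2 = 1 - 2\re z$, the complement of $A_1 \cup A_2 \cup A_3$ is exactly $E_0 \cup E_1 \cup E_{1/2} \cup \{ (1 \pm \ii\sqrt{3})/2 \}$ as you claim, so the conjecture is equivalent to excluding zeros from the three exceptional pieces plus the corner dichotomy. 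The corner dichotomy is indeed Proposition~\ref{prop:cyclot.divisor}; the arc $E_0$ is indeed eliminated by Proposition~\ref{prop:Zeros} (no zeros on $\mathcal{C}_0$ for odd $d$, and $\re z \le 1/2$ for even $d$); and your inversion argument is correct --- for even $d$ self-reciprocity makes $z \mapsto 1/z$ a symmetry of the zero set (note $G(d;0) \ne 0$, so this is well defined), and it carries $E_1$ bijectively onto $E_{1/2}$. Your quantitative estimate on $E_1$ also checks out: on that arc $|z+1|^2 = 5 + 4\cos\phi < 3$ and $|z|^{d-1} \le 1$, so a zero forces $|z|^{d-1} > 1/(1+\sqrt{3})$.

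The gaps, however, are exactly where the difficulty lives, and one step would fail as described. (i) The $E_1$ estimate is only asymptotic: it yields a contradiction for $d$ large depending on the distance to the corners, whereas the conjecture is asserted for \emph{every} $d \ge 1$. Moreover, large-$d$ exclusion away from the corners already follows from the paper's limit theorem (all zeros tend to $\Gamma$, and $E_0 \cup E_1 \cup E_{1/2}$ is disjoint from $\Gamma$, touching it only at the corners); so the genuinely new content a proof must supply is precisely the uniformity in $d$ that you defer --- your explicit bound is at best a quantitative version of what is already known qualitatively. (ii) The proposed treatment of $E_{1/2}$ for odd $d$ does not work as sketched: on $\re z = 1/2$, writing $z = 1/2 + \ii t$, the equation $G(d;z) = 0$ is \emph{two} real conditions (real and imaginary parts), and the identity $z - 1 = -\bar{z}$ does not collapse it to a single intersection equation of the type \eqref{ThetaEquation}. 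That reduction succeeded on $\mathcal{C}_0$ only for even $d$, because there $\bar{z} = 1/z$ makes the relevant quotient real-valued --- indeed the paper's $f(\theta)$ in \eqref{ThetaEquation} is not even real when $d$ is odd. Taking moduli in \eqref{eq:approx3}, where $|z| = |z-1|$ on the line, gives only a necessary condition, and it too is effective only for large $d$ away from the corners. (iii) A small factual slip: the $d \le 500$ verification in the paper concerns the irreducibility conjecture; for the present statement the recorded evidence stops around $d = 42$. In sum: a correct skeleton with an honest self-assessment, but the corner-and-uniformity regime you flag at the end is not a finishing detail --- it \emph{is} the open problem, and the statement remains a conjecture.
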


When counting the zeros in the sets $A_1$, $A_2$ and $A_3$ a very regular pattern emerges, which can be seen from Table~1. In fact, inspection of this table shows that the values of column $N_1$ (increased by $2$ when $6$ divides $d$) agree with the number of zeros on $\mathcal{C}_0$ obtained in Proposition~\ref{prop:Zeros}. Assuming that this is true for all $d \geq 1$, by self-reprocity of $G(d;z)$, it would follow that $N_1 = N_2 = N_3 = 4k - 1$ if $d = 6 k$. {\em We expect that the zero counting scheme indicated in the Table~1 generalizes to all $d \geq 1$.}

Numerically, the zeros in $A_1$ and $A_3$ can be found near the respective unit circle $\mathcal{C}_0$ and $\mathcal{C}_1$. 
\begin{conj}
If $d$ is even, all the zeros of $G(d;z)$ in $A_1$ are the zeros on the unit circle $\mathcal{C}_0$ given in Proposition~\ref{prop:Zeros}. If $d$ is odd, the zeros of $G(d;z)$ in $A_1$ alternately lie inside and outside $\mathcal{C}_0$. Furthermore, the zeros of $G(d;z)$ in $A_3$ are always outside of $\mathcal{C}_1$.
\end{conj}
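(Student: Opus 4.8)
The plan is to handle each of the three assertions by regarding the appropriate dominant-balance form of the equation $G(d;z)=0$ as an exponentially small perturbation of an explicitly solvable equation, and then to locate the perturbed zeros relative to the circle. For the statements about $A_1$ I would use \eqref{eq:approx1}: dividing \eqref{polynomial.G} by $(z-1)^d$ gives $G(d;z)/(z-1)^d = z^{d-1}+1-\frac{z+1}{z-1}\bigl(\frac{z}{z-1}\bigr)^{d-1}$, and on $A_1$ one has $\re z<1/2$, hence $|z/(z-1)|<1$, so the last term is $\mathcal{O}(r^{d})$ with $r<1$; the zeros are thus governed by the main part $z^{d-1}+1$, whose roots are the $(d-1)$-st roots of $-1$, all lying on $\mathcal{C}_0$. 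For $A_3$ I would use \eqref{eq:approx3}, where $\re z>1/2$ gives $|(z-1)/z|<1$ and the main equation is $(z-1)^d=z+1$.

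For even $d$, Proposition~\ref{prop:Zeros} already supplies the exact number $M=4(\lfloor(d-1)/6\rfloor+\delta_{6\mid d})+1$ of zeros of $G(d;z)$ lying precisely on $\mathcal{C}_0$; each satisfies $\re z\leq 1/2$, and since $|z|=1$ together with $\re z<1/2$ forces $|z-1|^2=2-2\re z>1$, every such zero in fact lies in $A_1$. It therefore suffices to prove that $G(d;z)$ has exactly $M$ zeros in $A_1$ in total, for then the on-circle zeros are all of them. I would obtain this count by the argument principle for $z^{d-1}+1-\frac{z+1}{z-1}\bigl(\frac{z}{z-1}\bigr)^{d-1}$ on $\partial A_1$ (intersected with a large disk), comparing with $z^{d-1}+1$ via Rouch\'e's theorem and checking that the number of $(d-1)$-st roots of $-1$ in $A_1$ is exactly $M$. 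On the arc of $\partial A_1$ lying on $\mathcal{C}_1$ the perturbation is uniformly $o(1)$, so Rouch\'e applies there; the genuine difficulty is the segment $\re z=1/2$ (where $|z/(z-1)|=1$) together with the two corner points $(1\pm\ii\sqrt3)/2$, and there I would argue directly, reusing the monotonicity and convexity analysis of $f$ and $g$ from the proof of Proposition~\ref{prop:Zeros} to rule out zeros crossing that segment.

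For odd $d$, Proposition~\ref{prop:Zeros} guarantees that no zero of $G(d;z)$ lies on $\mathcal{C}_0$, so each zero in $A_1$ is strictly inside or strictly outside, and I would establish the alternation by a first-order perturbation of the roots of $z^{d-1}+1$. Writing a nearby root as $\zeta=\zeta_0(1+u)$ with $\zeta_0^{d-1}=-1$ and solving to leading order yields $u=-\frac{1}{d-1}\frac{\zeta_0+1}{\zeta_0-1}\bigl(\frac{\zeta_0}{\zeta_0-1}\bigr)^{d-1}$, so that $|\zeta|-1\approx\re u$. With $\zeta_0=e^{\ii\theta_k}$, $\theta_k=(2k+1)\pi/(d-1)$, the identities $\frac{\zeta_0+1}{\zeta_0-1}=-\ii\cot(\theta_k/2)$ and $\bigl(\frac{\zeta_0}{\zeta_0-1}\bigr)^{d-1}$ of modulus $(2\sin(\theta_k/2))^{-(d-1)}$ with a phase advancing by $\pi$ as $k\mapsto k+1$ show that the sign of $\re u$ is $(-1)^{k}$ up to a factor independent of $k$ (recall $\cot(\theta_k/2)>0$ for $\theta_k\in(0,\pi)$); hence $|\zeta|-1$ alternates in sign with $k$, which is the claimed alternation. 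For $A_3$ the cleanest route is a pure modulus estimate: from $G(d;z)=0$ one has $(z-1)^d(z^{d-1}+1)=(z+1)z^{d-1}$, whence $|z-1|^d=|z+1|\,|z|^{d-1}/|z^{d-1}+1|$, and on $A_3$ the bounds $|z+1|>\sqrt3$ (from $\re z>1/2$ and $|z|>1$) and $|z^{d-1}+1|\leq|z|^{d-1}+1$ give $|z-1|^d>\sqrt3\,|z|^{d-1}/(|z|^{d-1}+1)$, which exceeds $1$ as soon as $(\sqrt3-1)|z|^{d-1}>1$, forcing $|z-1|>1$.

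The main obstacle, common to all three parts, is to upgrade these leading-order and large-$d$ estimates to statements valid for every $d$, and in particular to control the neighbourhoods of the two intersection points $(1\pm\ii\sqrt3)/2$ of $\mathcal{C}_0$ and $\mathcal{C}_1$. These are exactly the corners where $A_1$, $A_2$ and $A_3$ meet and where the relevant ratio ($|z/(z-1)|$ in \eqref{eq:approx1} or $|(z-1)/z|$ in \eqref{eq:approx3}) equals $1$, so the perturbation there is not small: the Rouch\'e comparison for even $d$ loses its margin, the first-order displacement $u$ for odd $d$ is of the same order as the root spacing and must be supplemented by a controlled remainder, and the crude modulus bound for $A_3$ only yields $|z-1|>1$ once $|z|$ exceeds $1$ by more than $\mathcal{O}(1/d)$. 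Supplying a uniform-in-$d$ remainder estimate near these corners---most plausibly by exploiting the phase of the exponentially small right-hand sides of \eqref{eq:approx1} and \eqref{eq:approx3} alongside the convexity already used in Proposition~\ref{prop:Zeros}---is the crux that keeps the full statement conjectural.
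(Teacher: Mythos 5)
This statement is not proved in the paper: it is one of the ``challenging conjectures'' with which the authors close the note, verified only numerically. So there is no proof of the paper's to compare yours against, and any argument you give must stand entirely on its own. Judged on that basis, your proposal is a sensible and well-organized \emph{strategy} --- the dominant-balance rewritings \eqref{eq:approx1} and \eqref{eq:approx3} are exactly the tools the paper itself introduces for asymptotic purposes, and your reduction of the even-$d$ case to a counting problem is the right idea, since no perturbative estimate can ever show that a zero lies \emph{exactly} on $\mathcal{C}_0$ --- but, as you candidly concede in your last paragraph, it does not close any of the three assertions, and the gaps you flag are genuine. Concretely: (a) the Rouch{\'e} comparison on $\partial A_1$ has no margin on the segment $\re z = 1/2$, where $|z/(z-1)| = 1$ and indeed $|(z+1)/(z-1)| > 1$, nor at the corners $(1 \pm \ii\sqrt{3})/2$, and you offer no substitute there beyond an appeal to the convexity analysis of Proposition~\ref{prop:Zeros}, which controls only points on $\mathcal{C}_0$ itself, not a neighborhood of the segment; (b) the odd-$d$ first-order computation is correct as far as it goes (and it is a nice check that for odd $d$ the displacement $u$ comes out real, matching the claimed in/out alternation), but it is valid only where $(2\sin(\theta_k/2))^{d-1}$ is exponentially large; for $\theta_k$ near $\pi/3$ the displacement is comparable to the root spacing $2\pi/(d-1)$, the linearization is uncontrolled, and even away from the corner your argument yields at best ``for all sufficiently large $d$,'' whereas the conjecture asserts the pattern for \emph{every} $d$; (c) the $A_3$ modulus bound requires $|z|^{d-1} > 1/(\sqrt{3}-1)$, i.e.\ $|z| - 1 \gtrsim c/d$, and fails precisely in the $\mathcal{O}(1/d)$-annulus near $|z| = 1$ where zeros of $G(d;z)$ in $A_3$ actually accumulate (near the corner points).

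One further error worth fixing even within your plan: your target count for even $d$ is off by two when $6 \mid d$. The count $M = 4(\lfloor (d-1)/6 \rfloor + \delta_{6 \mid d}) + 1$ of Proposition~\ref{prop:Zeros} includes the intersection points $(1 \pm \ii\sqrt{3})/2$, which satisfy $\re z = 1/2$ and $|z - 1| = 1$ and hence lie on the boundary of $A_1$, not in the open set $A_1$; the paper itself notes (in the discussion of Table~1) that it is $N_1$ \emph{increased by $2$} that matches $M$ when $6$ divides $d$. So your Rouch{\'e} step, if it could be carried out, should establish that $G(d;z)$ has exactly $M - 2\delta_{6 \mid d}$ zeros in $A_1$, and correspondingly you must check that this, not $M$, is the number of $(d-1)$-st roots of $-1$ interior to $A_1$. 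In summary: no step of your proposal is wrong in spirit, but none of the three parts is proved, the statement retains exactly the conjectural status it has in the paper, and the crux you identify --- uniform-in-$d$ control near the corners where $|z/(z-1)|$ or $|(z-1)/z|$ equals $1$ --- is indeed the obstruction the authors leave open.
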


\begin{conj} \label{conj:convex.left}
The zeros of $G(d; z)$ in $A_2$ are located on a curve which is convex from the left.
\end{conj}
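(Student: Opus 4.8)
The plan is to exhibit the $A_2$ zeros on an explicit curve and reduce convexity to a sign condition on its curvature. The starting point is the exact identity $G(d;z)=0 \iff (z-1)^d(z^{d-1}+1)=(z+1)z^{d-1}$. On the one hand this shows every zero lies on the level set $\{u=0\}$ of $u(z)\DEF d\log|z-1|+\log|z^{d-1}+1|-(d-1)\log|z|-\log|z+1|$, which is harmonic in $A_2$ because $z^{d-1}+1$ has no zeros inside $\mathcal C_0$. On the other hand, the substitution $\zeta \DEF (z-1)/z$ (so that $z=1/(1-\zeta)$ carries $\{|\zeta|>1\}$ onto $\{\re z<1/2\}$) rewrites the equation as $\zeta^d+\zeta-2=-\zeta^d(1-\zeta)^{-(d-1)}$, whose right-hand side has modulus $|z-1|^d/|z|$ and is therefore exponentially small for zeros bounded away from the lens corners $e^{\pm\ii\pi/3}=(1\pm\ii\sqrt3)/2$. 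Thus, away from the corners, the $A_2$ zeros are governed by the clean equation $\zeta^d=2-\zeta$, whose relevant roots lie just outside the unit circle at $\zeta=\rho\,e^{\ii\psi}$ with $\psi\approx 2\pi k/d$ $(d/6<k<d/2)$ and $\rho=1+\tfrac1{2d}\log(5-4\cos\psi)+O(d^{-2})$.

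Mapping back by $z=1/(1-\zeta)$ and using that $\partial z/\partial\rho|_{\rho=1}=-1/[4\sin^2(\psi/2)]$ is real and negative, I would show that each such zero is displaced purely to the left of the line $\re z=1/2$, so that the leading-order locus is the graph
\[ \re z = \tfrac12-\frac{1+4y^2}{8d}\,\log\frac{9+4y^2}{1+4y^2}+O(d^{-2}),\qquad y=\im z. \]
Because the zeros come in conjugate pairs this curve is symmetric about the real axis, is rightmost on it and bulges left, so ``convex from the left'' is equivalent to a definite sign of the curvature of this arc, i.e.\ (as a graph over $y$) to concavity of $\re z$ in $y$, and for the actual zeros to a sign condition on the discrete second differences of $\re z$ along consecutive zeros ordered by $\im z$. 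For the bulk of the zeros — those with $|\im z|$ bounded away from $\sqrt3/2$ — this is settled by the leading term alone: a direct computation gives $\partial_y^2\big[(1+4y^2)\log\tfrac{9+4y^2}{1+4y^2}\big]>0$ on an interval to the right of the origin, the second differences then have the correct (negative) sign with a margin of order $d^{-3}$, and this dominates the corrections uniformly there.

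The difficulty — which I expect to be essentially the entire difficulty — is concentrated near the two corners. Three things go wrong simultaneously there: $|z-1|\to1$, so the omitted term $-\zeta^d(1-\zeta)^{-(d-1)}$ and the factor $z^{d-1}+1$ cease to be negligible; the very same explicit function $(1+4y^2)\log\frac{9+4y^2}{1+4y^2}$ has an inflection near $\im z\approx0.6$, beyond which its second derivative changes sign and the leading term pushes $\re z$ the \emph{wrong} way; and the topmost zeros enter an $O(1/d)$ boundary layer of $\mathcal C_0$, where the correction grows to order $1/d$ and, for all zeros within $O(d^{-1}\log d)$ of a corner, exceeds the leading $O(d^{-3})$ second difference. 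In precisely this regime it is the correction — not the leading term — that decides convexity, and it must do so against a leading contribution of the wrong sign. This is the crux: the leading asymptotics by themselves suggest the upper portion is non-convex, and the conjecture can hold only because of the fine structure of the correction in the corner layer.

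Accordingly, the heart of a proof must be a uniform analysis of the full equation $\zeta^d=(2-\zeta)\big[1+(1-\zeta)^{-(d-1)}\big]^{-1}$ in an $O(1/d)$ neighbourhood of $e^{\pm\ii\pi/3}$, carried to high enough order to control the second differences there and matched to the bulk estimate. I would try to organise this using two structural features: the self-reciprocity of $G(d;z)$ for even $d$, which pairs the $A_2$ zeros lying exactly on $\mathcal C_0$ (Proposition~\ref{prop:Zeros}) with zeros in $A_3$ and rigidly constrains the endpoint behaviour, and the fact that the corners are the zeros of $z^2-z+1$, an honest factor of $G(d;z)$ exactly when $6\mid d$ (Proposition~\ref{prop:cyclot.divisor}); contrasting $6\mid d$ with $6\nmid d$ should pin down how the last zeros approach the corners. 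A cleaner-looking alternative is to forgo asymptotics and prove directly that the curvature $\kappa=[u_{xx}(u_y^2-u_x^2)-2u_{xy}u_xu_y]/|\nabla u|^3$ (with $u_{yy}=-u_{xx}$) of the harmonic level curve $\{u=0\}$ keeps one sign on its $A_2$ component; but verifying this sign uniformly in $d$ runs into the same corner cancellation. Either way, the corner layer is where I expect essentially all the work — and the genuine risk — to lie.
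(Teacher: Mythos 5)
You should know at the outset that the paper contains no proof of this statement: Conjecture~\ref{conj:convex.left} is one of the closing open problems, supported only by numerics (Figures~\ref{fig6} and \ref{fig:limitdistr}), so there is no ``paper proof'' to compare yours against, and your proposal must stand alone as a complete argument. It does not, and to your credit you say so: the corner analysis that you yourself identify as ``essentially the entire difficulty'' is never carried out, so what you have is a program, not a proof. The parts you do compute are correct, and they align with the paper's own asymptotic framework (your substitution $\zeta=(z-1)/z$ turns the paper's regrouping \eqref{eq:approx2} into $\zeta^d+\zeta-2=-\zeta^d(1-\zeta)^{-(d-1)}$, with error of modulus $|z-1|^d/|z|$ as you say). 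I verified the leading locus: with $\zeta=e^{\ii\psi}$ one has $\sin^2(\psi/2)=1/(1+4y^2)$ and $5-4\cos\psi=(9+4y^2)/(1+4y^2)$, the radial push is indeed $-1/[4\sin^2(\psi/2)]$, and your graph $\re z=\tfrac12-\frac{1+4y^2}{8d}\log\frac{9+4y^2}{1+4y^2}+O(d^{-2})$ follows. The inflection you report is also genuine: writing $h(y)\DEF(1+4y^2)\log\frac{9+4y^2}{1+4y^2}$ and $s=4y^2$, one gets $h''=8\bigl[\log\frac{9+s}{1+s}-\frac{8}{9+s}\bigr]-\frac{1024\,s}{(9+s)^2(1+s)}$, which is $8(\log 9-8/9)>0$ at $y=0$ but $8(\log 3-2/3)-16<0$ at the corner value $y=\sqrt3/2$, with the sign change near $|y|\approx 0.6$.

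The genuine gap, beyond the admitted missing corner analysis, is an internal inconsistency in how you localize the difficulty. The band where the leading term has the wrong sign is $|y|\in(\approx 0.6,\sqrt3/2)$, an interval of \emph{fixed} positive length, not an $O(1/d)$ or $O(d^{-1}\log d)$ boundary layer. On a fixed sub-band, say $|y|\in[0.65,0.8]$, the zeros stay a fixed distance from the corners, so $|z-1|\le 1-c$ with $c>0$ independent of $d$, and the term you discard is $O(e^{-cd})$; a Rouch{\'e} argument places the true zeros within $O(e^{-cd})$ of the model zeros, which lie on a curve that is $C^2$-close (with $O(d^{-2})$ error, smooth in $\psi$) to your leading graph. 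Hence the discrete second differences of $\re z$ along consecutive zeros there are $\Theta(d^{-3})$ with the sign of $-h''$ --- the \emph{wrong} sign --- and exponentially small corrections cannot overturn them. So your assertion that ``it is the correction --- not the leading term --- that decides convexity'' is true only within an $O(d^{-1}\log d)$ corner layer; in the fixed band between the inflection and that layer, your own (correct) asymptotics, taken at face value, do not leave the conjecture open but actively predict discrete non-convexity for all sufficiently large $d$, consistent with the small-$d$ plots on which the conjecture rests simply because $\Theta(d^{-3})$ deviations are invisible there. Any completion of your program must therefore either locate an error in this band (I could not find one in your computations) or accept that the method yields a \emph{disproof} for large $d$ rather than a proof; the same sign problem resurfaces in your level-curve alternative, where one would have to show the curvature of the $A_2$ component of $\{u=0\}$ keeps one sign --- precisely what the band analysis contradicts. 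As it stands, the statement remains exactly what the paper says it is: a conjecture, and one your analysis gives concrete quantitative reason to re-examine rather than to believe.
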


{\bf Acknowledgment.} The first author is grateful to Don Zagier and
Wadim Zudilin for inspiring this work when attending the workshop
``Geometry and Arithmetic around Hypergeometric Functions'' from
September 28th --- October 4th, 2008, which was made possible by the
Mathematisches Forschungsinstitut Oberwolfach (MFO) and the
Oberwolfach-Leibniz-Fellow Programme (OWFL).

\appendix

\section{}
In the proof of Proposition \ref{prop:uniqueness} we utilize the following.

\begin{lem} \label{PresLemma} Let $d\geq 2$ and $\PT{z}_1\not = \PT{z}_2$ be  two fixed points
in $\mathbb{R}^{d+1}$. Then
\begin{equation} \label{convolution}
\frac{C_d}{|\PT{z}_1-\PT{z}_2|^{d-1}}=\int_{\mathbb{R}^{d+1}}
\frac{1}{|\PT{t}-\PT{z}_1|^d|\PT{t}-\PT{z}_2|^d}\, d \PT{t}=: J
(\PT{z}_1,\PT{z}_2), \qquad C_d = \frac{\pi^{(d+3)/2} \gammafcn((d-1)/2)}{[ \gammafcn(d/2) ]^2}.
\end{equation}
\end{lem}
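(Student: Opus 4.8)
The plan is to recognize $J(\PT{z}_1,\PT{z}_2)$ as an instance of the Riesz composition (convolution) formula and to evaluate it directly by Feynman parametrization, so that the constant $C_d$ emerges from a product of Gamma functions. First I would record the convergence bookkeeping that pins down the hypothesis $d\geq 2$: near $\PT{z}_i$ the integrand is $O(|\PT{t}-\PT{z}_i|^{-d})$ against the volume element $r^{d}\,dr$ in $\R^{d+1}$, hence locally integrable since $d<d+1$; at infinity it is $O(|\PT{t}|^{-2d})$ against $r^{d}\,dr$, so integrability requires $2d>d+1$, i.e.\ $d>1$. Thus $J$ is a well-defined positive number exactly for $d\geq 2$, which also licenses every use of Tonelli's theorem below.

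The core computation combines the two factors with the Beta-integral form of Feynman's identity. Writing $A=|\PT{t}-\PT{z}_1|^2$ and $B=|\PT{t}-\PT{z}_2|^2$, I would use
\[
\frac{1}{|\PT{t}-\PT{z}_1|^{d}|\PT{t}-\PT{z}_2|^{d}}=\frac{\gammafcn(d)}{\gammafcn(d/2)^2}\int_0^1\frac{[x(1-x)]^{d/2-1}}{[xA+(1-x)B]^{d}}\,dx
\]
and then interchange the $x$- and $\PT{t}$-integrations (Tonelli). The key algebraic step is completing the square: with $\PT{z}_x\DEF x\PT{z}_1+(1-x)\PT{z}_2$ one has $xA+(1-x)B=|\PT{t}-\PT{z}_x|^2+x(1-x)|\PT{z}_1-\PT{z}_2|^2$, so after the translation $\PT{u}=\PT{t}-\PT{z}_x$ the inner integral becomes the radially symmetric $\int_{\R^{d+1}}(|\PT{u}|^2+m^2)^{-d}\,d\PT{u}$ with $m^2=x(1-x)|\PT{z}_1-\PT{z}_2|^2$. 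Passing to spherical coordinates and substituting $u=r^2$ turns this into a Beta integral, giving $\pi^{(d+1)/2}\,\gammafcn((d-1)/2)\,\gammafcn(d)^{-1}\,m^{1-d}$; here the exponent $n-2d=1-d$ and the argument $d-n/2=(d-1)/2$ come from $n=d+1$, and convergence of the radial integral again needs $d+1<2d$.

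Substituting this back, the factor $\gammafcn(d)$ cancels, the power of $|\PT{z}_1-\PT{z}_2|$ assembles to $m^{1-d}$, which is the claimed homogeneity $-(d-1)$, and the remaining $x$-integral collapses because the exponents combine to $d/2-1+(1-d)/2=-1/2$, leaving $\int_0^1[x(1-x)]^{-1/2}\,dx=\betafcn(1/2,1/2)=\pi$. Collecting the constants yields $J=\pi^{(d+3)/2}\gammafcn((d-1)/2)\gammafcn(d/2)^{-2}\,|\PT{z}_1-\PT{z}_2|^{-(d-1)}$, which is exactly the asserted $C_d/|\PT{z}_1-\PT{z}_2|^{d-1}$. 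I do not expect a genuine obstacle here: this is a classical Riesz-kernel identity, and the only real work is the careful tracking of the Gamma-function constants together with the verification that all convergence conditions reduce precisely to $d\geq 2$. As an alternative route, one could reduce to the constant at the outset --- by translation, rotation, and the scaling $\PT{t}\mapsto|\PT{z}_1-\PT{z}_2|\,\PT{t}$, which gives $J(\PT{z}_1,\PT{z}_2)=|\PT{z}_1-\PT{z}_2|^{-(d-1)}J(\PT{0},\PT{e})$ for a unit vector $\PT{e}$ --- and then evaluate $J(\PT{0},\PT{e})$; or compute the same integral on the Fourier side, using that $|\cdot|^{-d}$ is a Riesz kernel in $\R^{d+1}$ whose transform is a multiple of $|\cdot|^{-1}$, so that the product of transforms is a multiple of $|\cdot|^{-2}$ and inverts to a multiple of $|\cdot|^{-(d-1)}$.
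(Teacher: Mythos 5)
Your proof is correct, and it takes a genuinely different route from the paper's. The paper translates $\PT{z}_2$ to the origin, passes to spherical coordinates, applies the Funk--Hecke formula to the angular integral, evaluates the radial integral in closed form as a ${}_2F_1$ via a table entry of Prudnikov--Brychkov--Marichev, and finally reduces the constant to a ${}_3F_2$ at unity evaluated by a second table entry. You instead merge the two denominators with the Feynman--Schwinger identity at exponents $\alpha=\beta=d/2$, complete the square so that $xA+(1-x)B=|\PT{t}-\PT{z}_x|^2+x(1-x)|\PT{z}_1-\PT{z}_2|^2$, and reduce the inner integral to the elementary radial formula $\int_{\R^{d+1}}\bigl(|\PT{u}|^2+m^2\bigr)^{-d}\,d\PT{u}=\pi^{(d+1)/2}\,\gammafcn((d-1)/2)\,m^{1-d}\big/\gammafcn(d)$; the parameter integral then collapses since $d/2-1+(1-d)/2=-1/2$, yielding $\betafcn(1/2,1/2)=\pi$, and I have checked that the constants assemble exactly to $C_d=\pi^{(d+3)/2}\gammafcn((d-1)/2)\big/[\gammafcn(d/2)]^2$, in agreement with the paper's value (note $\gammafcn((d+1)/2)=\tfrac{d-1}{2}\gammafcn((d-1)/2)$ reconciles the two forms). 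Your convergence bookkeeping --- local integrability of $|\PT{t}-\PT{z}_i|^{-d}$ in $\R^{d+1}$ for every $d$, decay $|\PT{t}|^{-2d}$ at infinity forcing $2d>d+1$ --- correctly isolates the hypothesis $d\geq 2$, and positivity of the integrand legitimizes every use of Tonelli. As for what each approach buys: yours is self-contained and elementary, requiring only Beta--Gamma identities instead of two hypergeometric table evaluations, and it is essentially the classical proof of the Riesz composition formula, so it generalizes immediately to $\int|\PT{t}-\PT{z}_1|^{-\alpha}|\PT{t}-\PT{z}_2|^{-\beta}\,d\PT{t}$ whenever $\alpha,\beta<d+1<\alpha+\beta$ (the collapse of the $x$-integral to exactly $\pi$ is special to $\alpha=\beta=d$; in general one simply picks up another Beta factor); the paper's Funk--Hecke computation stays within the spherical-harmonics toolkit used elsewhere in the paper, at the price of leaning on external integral tables. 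Your two alternative sketches --- the scaling reduction $J(\PT{z}_1,\PT{z}_2)=|\PT{z}_1-\PT{z}_2|^{1-d}J(\PT{0},\PT{e})$ and the Fourier-side composition of Riesz kernels --- are also sound, the latter being the standard route in Landkof.
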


\begin{proof}
Observe that $ J (\PT{z}_1,\PT{z}_2)$ is a convergent integral for
any $\PT{z}_1\not = \PT{z}_2$. First a translation
$\PT{t}=\PT{u}+\PT{z}_2$ and then change to spherical coordinates
yields
\begin{equation*}
J(\PT{z}_1,\PT{z}_2) = \int_{\mathbb{R}^{d+1}}
\frac{1}{|\PT{u}-\PT{z}_1+\PT{z}_2|^d} \, \frac{d
\PT{u}}{|\PT{u}|^d} = \omega_d \int_0^\infty \left\{
\int_{\mathbb{S}^d} \frac{d \sigma_d( \PT{\bar{u}} )}{\left( \rho^2
- 2 \rho r \, \PT{\bar{u}} \cdot \PT{\bar{x}} + r^2 \right)^{d/2}}
\right\} \, d \rho,
\end{equation*}
where we used the notation $\rho:=|\PT{u}|$,
$r:=|\PT{z}_1-\PT{z}_2|$, $\PT{\bar{u}}:=\PT{u}/\rho$,
$\PT{\bar{x}}:=(\PT{z}_1-\PT{z}_2)/r$ and
\begin{equation*}
\omega_d = 2 \pi^{(d+1)/2} \big/ \gammafcn((d+1)/2)
\end{equation*}
is the surface area of the unit sphere $\mathbb{S}^d$. Using the
Funk-Hecke formula \cite[p.~20]{Mu1966}, we derive
\begin{equation*}
J(\PT{z}_1,\PT{z}_2) = \omega_{d-1} \int_0^\infty \left\{
\int_{-1}^1 \frac{\left( 1 - v^2 \right)^{d/2-1} d v}{\left( \rho^2
- 2 \rho r \, v + r^2 \right)^{d/2}} \right\} d \rho = \omega_{d-1}
\int_{-1}^1 H(v) \left( 1 - v^2 \right)^{d/2-1} d v,
\end{equation*}
where the formula \cite[Eq.~2.2.9.7]{PrBrMa1986I} enables us to
compute
\begin{equation*}
H(v) := \int_0^\infty \frac{d \rho}{\left( \rho^2 - 2 \rho r \,
 v + r^2 \right)^{d/2}} = r^{1-d} \frac{1}{d-1} \Hypergeom{2}{1}{1/2,(d-1)/2}{(d+1)/2}{1-v^2}.
\end{equation*}
Symmetry and a change of variable $u = 1 - v^2$ leads to (recall
$r=|\PT{z}_1-\PT{z}_2|$)
\begin{align*}
C_d &= |\PT{z}_1-\PT{z}_2|^{d-1} \, J(\PT{z}_1,\PT{z}_2) =
\frac{\omega_{d-1}}{d-1} \int_0^1 u^{d/2-1} \left( 1 - u
\right)^{1/2-1} \Hypergeom{2}{1}{1/2,(d-1)/2}{(d+1)/2}{u} d u,
\intertext{where the integral can be expressed as a
$3F2$-hypergeometric function at unity (cf.
\cite[Eq.~2.21.1.5]{PrBrMa1990III})} &= \frac{\omega_{d-1}}{d-1} \,
\betafcn(d/2,1/2) \,
\Hypergeom{3}{2}{1/2,(d-1)/2,d/2}{(d+1)/2,(d+1)/2}{1} =
\frac{\omega_{d}}{d-1} \, \pi \left[
\frac{\gammafcn((d+1)/2)}{\gammafcn(d/2)} \right]^2.
\end{align*}
The last step follows from the relation $\betafcn(d/2,1/2) =
\omega_{d} / \omega_{d-1}$ and the generalized hypergeometric
function can be evaluated using \cite[Eq.~7.4.4.19]{PrBrMa1990III}.
The result follows.
\end{proof}

\begin{proof}[Proof of Proposition \ref{prop:uniqueness}]
Suppose $\eta_1$ and $\eta_2$ are two signed equilibria on
$\mathbb{S}^d$ associated with the same external field $Q$. Then
\begin{equation*}
V^{\eta_1}(\PT{x})+Q(\PT{x})= F_1 , \quad
V^{\eta_2}(\PT{x})+Q(\PT{x})= F_2 \qquad \text{for all $\PT{x} \in
\mathbb{S}^d$.}
\end{equation*}
Subtracting the two equations and integrating with respect to
$\eta:=\eta_1-\eta_2$ we obtain
\begin{equation*}
\mathcal{I} (\eta) = \int V^\eta (\PT{x}) \, d \eta
(\PT{x})= \int \int \frac{1}{|\PT{x}-\PT{y}|^{d-1}}\,
d\eta(\PT{x})\, d\eta(\PT{y})= 0.
\end{equation*}
Applying \eqref{convolution} from Lemma \ref{PresLemma} we obtain
\begin{align}
0 &= \int \int \frac{1}{|\PT{x}-\PT{y}|^{d-1}}\, d\eta(\PT{x})\,
d\eta(\PT{y})\nonumber \\
&= \frac{1}{C_d}\int \int \left( \int_{\mathbb{R}^{d+1}}
\frac{1}{|\PT{x}-\PT{z}|^d}\frac{1}{|\PT{z-y}|^d} \,
d\PT{z} \right) \,d\eta(\PT{x})\, d\eta(\PT{y}) \nonumber \\
&=\frac{1}{C_d}\int_{\mathbb{R}^{d+1}} \left[ \int
\frac{1}{|\PT{x}-\PT{z}|^d} \, d\eta(\PT{x}) \right]^2 \,
d\PT{z},\label{EnergyZero}
\end{align}
where the interchange of the integration is justified because signed
equilibria have a.e. finite potentials. From \eqref{EnergyZero} we 
conclude for the Riesz potential of $\eta$
\begin{equation*}
\int \frac{1}{|\PT{x}-\PT{z}|^d} \, d\eta(\PT{x}) =0 \quad {\rm a.e.
\ in} \ \mathbb{R}^{d+1}.
\end{equation*}
It turns out that linear combinations of $\{
|\PT{x}-\PT{z}|^{-d}\}_{z\in\mathbb{R}^{d+1}}$ are a dense class in
the space of continuous functions on $\mathbb{S}^d$ (see \cite[p.
214]{La1972}), we obtain that $\eta \equiv 0$ (see also
\cite[Theorem~1.12]{La1972}).
\end{proof}

\begin{proof}[Proof of Asymptotic \eqref{AsymptoticEquation}]
Let $\zeta_d$ denote the largest real zero of $G(d,q;z)$ (that is $R_q = \zeta_d$). We may assume (see discussion regarding appearance of the number $2$) that 
\begin{equation*}
\zeta_d = 2 + f_d \qquad \text{with $f_d = f(\zeta_d) \to 0$ as $d \to \infty$.}
\end{equation*}
Rewriting the equation $G(d,q;\zeta_d) = 0$ in the following way using above relation and exploiting the exponential decay of the right-hand side yields
\begin{equation*}
\left( 1 + f_d \right)^d - q \left( 3 + f_d \right) = - q \left( 1 + f_d \right) \left( \frac{1 + f_d}{2 + f_d} \right)^{d-1} = q \, \mathcal{O}((2/3)^d) \qquad \text{as $d \to \infty$,}
\end{equation*}
or equivalently,
\begin{equation*}
\log( 1 + f_d ) = \frac{1}{d} \log( 3 q ) + \frac{1}{d} \log ( 1 + \frac{f_d}{3} + \mathcal{O}((2/3)^d) ) \qquad \text{as $d \to \infty$.}
\end{equation*}
Now we can use the series expansion of the logarithm function to get a relation for $f_d$:
\begin{equation*}
f_d + \mathcal{O}(f_d^2) = \frac{1}{d} \log( 3 q ) + \frac{1}{d} \mathcal{O}(f_d) \qquad \text{as $d \to \infty$.}
\end{equation*}
The asymptotic form \eqref{AsymptoticEquation} follows.
\end{proof}

\def\cprime{$'$} \def\polhk#1{\setbox0=\hbox{#1}{\ooalign{\hidewidth
  \lower1.5ex\hbox{`}\hidewidth\crcr\unhbox0}}}

\end{document}